\let\svthefootnote\thefootnote
\newcommand\blankfootnote[1]{%
  \let\thefootnote\relax\footnotetext{#1}%
  \let\thefootnote\svthefootnote%
}
\newcounter{ag}
\newcounter{ab}
\newcounter{ar}
\newcounter{irg}
\newcounter{igh}
\newcounter{ms}
\newtheorem{theorem}{Theorem}
\newtheorem{corollary}{Corollary}
\newtheorem{definition}{Definition}
\newcommand{\crit}{\ensuremath{t^*}\xspace}
\newcommand{\dbsize}{\ensuremath{N}\xspace}
\newcommand{\eps}{\ensuremath{\varepsilon}\xspace}
\newcommand{\epsfrac}{\ensuremath{\rho}\xspace}
\newcommand{\grand}{\bar{y}}
\renewcommand{\k}{\ensuremath{k}\xspace}
\newcommand{\lap}{\ensuremath{{\sf Lap}}\xspace}
\newcommand{\mse}{\textit{MSE}\xspace}
\newcommand{\msa}{\textit{MSA}\xspace}
\newcommand{\normal}{\ensuremath{\mathcal{N}}}
\newcommand{\sse}{\textit{SSE}\xspace}
\newcommand{\ssa}{\textit{SSA}\xspace}
\newcommand{\se}{\textit{SE}\xspace}
\newcommand{\sa}{\textit{SA}\xspace}
\newcommand{\sqa}{\textit{SQA}\xspace}
\newcommand{\sqe}{\textit{SQE}\xspace}
\newcommand{\spa}{\textit{SPA}\xspace}
\newcommand{\var}{\textit{VAR}\xspace}
\newcommand{\varq}{\ensuremath{\var_q}\xspace}
\newcommand{\x}{\ensuremath{\mathbf{x}}\xspace}
\newcommand{\xprime}{\ensuremath{\mathbf{x'}}\xspace}
\newcommand{\X}{\ensuremath{\mathbf{X}}\xspace}
\newcommand{\data}{\ensuremath{\mathbf{x}}\xspace}
\newcommand{\Data}{\ensuremath{\mathbf{X}}\xspace}
\newcommand{\E}{\text{E}}
\newcommand{\yj}{\bar{y}_j}
\newcommand{\yb}{\bar y}
\newtheorem{prop}{Proposition}
\title{ Improved Differentially Private Analysis of Variance}
\author{
\textbf{Marika Swanberg}\\
Mathematics Department\\
Reed College, Portland, OR\\
\texttt{marlswanb@reed.edu}
\And
Ira Globus-Harris\\
Mathematics Department\\
Reed College, Portland, OR\\
 \texttt{irglobush@reed.edu}
 \And
 Iris Griffith\\
Mathematics Department\\
Reed College, Portland, OR\\
 \texttt{irisrose@reed.edu}
 \And
 Anna Ritz\\
Biology Department\\
Reed College, Portland, OR\\
 \texttt{aritz@reed.edu}
 \And
 Adam Groce\thanks{Corresponding authors.}\\
Mathematics Department\\
Reed College, Portland, OR\\
 \texttt{agroce@reed.edu} 
 \And
 Andrew Bray$^*$\\
Mathematics Department\\
Reed College, Portland, OR\\
\texttt{abray@reed.edu}
}
\date{}
\begin{document}

\maketitle

\begin{abstract}
{Hypothesis testing is one of the most common types of data analysis and forms the backbone of scientific research in many disciplines.  Analysis of variance (ANOVA) in particular is used to detect dependence between a categorical and a numerical variable.  Here we show how one can carry out this hypothesis test under the restrictions of differential privacy.  We show that the $F$-statistic, the optimal test statistic in the public setting, is no longer optimal in the private setting, and we develop a new test statistic $F_1$ with much higher statistical power.  We show how to rigorously compute a reference distribution for the $F_1$ statistic and give an algorithm that outputs accurate $p$-values.  We implement our test and experimentally optimize several parameters.  We then compare our test to the only previous work on private ANOVA testing, using the same effect size as that work.  We see an order of magnitude improvement, with our test requiring only 7\% as much data to detect the effect.}
\end{abstract}

\section{Introduction}

A universal and recurring challenge in scientific research is determining whether a measured effect is real.  That is, researchers wish to determine if the effect observed in a particular dataset indicates a similar effect in the broader world from which the sample was drawn. The most common statistical tool to make this determination is a hypothesis test.
The particular form of the hypothesis test is driven by the scientific question and the data at hand. 

Hypothesis testing is a common tool in population association studies, where the goal is to identify whether genetic variation is associated with disease risk~\cite{balding2006tutorial}.  Consider a study looking at the effect of a particular gene's mutation on some health outcome (e.g., blood pressure or weight).  The data may include the mutation status of that gene (which may harbor one or more mutations on one or both copies of DNA).  
The researcher's goal is to determine if the gene's mutation status has an impact on that health outcome.  When the health outcome is measured using a numerical variable, the natural hypothesis test to use is the one-way ANOVA (analysis of variance) test, treating the gene's mutation status as a categorical variable.

The first step in conducting the ANOVA is to calculate a single number, the $F$-statistic, which measures the variation in group means compared to the variation in individual data points. The $F$-statistic is constructed so that if the expected value of the health outcome is the same in all groups, the expected value of $F$ is 1.  If this is not the case, the value can be dramatically larger.  Seeing a high value of $F$, a researcher will compare this value to the \textit{reference distribution}, i.e., the distribution of $F$ that would occur if the gene had no impact on the health outcome.  The result of this comparison is a $p$-value, the probability that the observed $F$ could occur by chance.  If the $p$-value is low, the analyst can conclude that this gene must indeed affect the given health outcome.  (For more detail on how ANOVA is used in this setting, see~\cite{myers2003researchdesign}.)

The analysis described above assumes that the researcher has full access to the database. However, there are many settings in medicine, psychology, education, and economics (not to mention private-sector data analysis) where the database is not available to the analyst due to privacy concerns. A well-established solution is to allow the researcher to issue queries to the data which are proven to satisfy differential privacy.  Differential privacy requires the addition of random noise to statistical queries and guarantees that the results reveal very little about any individual's data.

In this paper we propose a new statistic for ANOVA, called $F_1$, that is specifically tailored to the differentially private setting. This statistic measures the same variations as the $F$ statistic, but uses $|a-b|$ instead of $(a-b)^2$ to measure the distance between $a$ and $b$.  In the public setting the $F_1$ is a worse test statistic than the traditional $F$-statistic, but we show that in the private setting it has much higher power than the previously published differentially private $F$-statistic.  That is, we show that it can detect effects with a little as 7\% of the data that was previously required.  (In one example, an effect that took 5300 data points to detect 90\% of the time with $\epsilon=1$ in the prior work takes only 350 data points to detect using our new hypothesis test.)


\subsection{Contributions and organization}

We first review differential privacy, hypothesis testing, and the body of work that lies at the intersection of the two fields (Section~\ref{sec:background}).  
In Section \ref{sec:f1} we then present a new test statistic, $F_1$, for ANOVA in the private setting.  While there is some work on differentially private hypothesis testing, designing a new test statistic explicitly tailored for compatibility with differential privacy has been done by few others~\cite{rogers2017new}.  

In Section 3.2 we give a private algorithm for computing an approximation of $F_1$ by applying the Laplacian mechanism to the computation of several intermediate values.  Section 3.3 then describes how to compute the correct reference distribution for $F_1$ to in order to compute accurate $p$-values, which are the end result used by practitioners.  Computing the reference distribution is complicated by the fact that, unlike the traditional $F$-statistic, $F_1$ is not scale-free. 

We implement the private $F_1$-statistic and apply the method to different simulated datasets in Section~\ref{sec:results}.  The computational experiments allow us to optimize $\rho$, a parameter that determines the allocation of our privacy budget between the two important intermediate values.  We also compare our method to prior work~\cite{campbell2018diffprivanova}, and show an order of magnitude improvement in statistical power.


Finally, in Section~\ref{sec:considerations} we present a generalization of the $F_1$ statistic that allows for an arbitrary exponent in the distance measure (besides the absolute value from $F_1$ and the $L^2$-norm from the traditional $F$). We find that the $L^1$-norm used in the $F_1$ statistic is the most powerful across a wide range of scenarios.

\section{Background}
\label{sec:background}
We begin by discussing hypothesis testing in general, the one-way analysis of variance (ANOVA) test in particular, and differential privacy.  Readers familiar with one or more of these topics should be able to skip the relevant sections.  We then discuss how these topics come together in private hypothesis testing and related work in this area.

\subsection{Hypothesis Testing}
Hypothesis tests are common tools for making statistical inferences from data. The end goal of a hypothesis test is to determine whether a data set is consistent with a proposed model. This model is called the \textit{null hypothesis}, denoted $H_0$, and it suggests a mechanism by which the data could have been generated. The mechanism is chosen to be scientifically meaningful, for example: the variable of interest has the same distribution across all of the treatment groups.

The comparison between $H_0$ and an observed data set is made using a \textit{test statistic}.  A test statistic $f$ is simply a function from the data set to the real numbers.  The goal is to design a test statistic with a known distribution when the data comes from $H_0$, but which will follow a markedly different distribution under other scenarios.  The question then becomes, for a given database $\data$ with $f(\data)=t$, how likely is a value at least as extreme as $t$ to occur if $\data$ was drawn from $H_0$.  To compute this probability, we need to compare $t$ to the reference distribution.

\begin{definition}[Reference Distribution] \label{def:refdist}
Suppose $f$ is a function that computes a test statistic. The reference distribution for $f$ is the probability distribution of the statistic $T$ when $T=f(\Data)$ and $\Data$ is drawn from a distribution consistent with $H_0$.
\end{definition}

This reference distribution is used to calculate a $p$-value. A $p$-value is the probability, under the reference distribution, of drawing a statistic at least as extreme as the observed statistic.

\begin{definition}[$p$-value] \label{def:pvalue}
For a given test statistic $t=f(\x)$ and null hypothesis $H_0$, the $p$-value is defined as
\begin{equation*}
\Pr[T\geq t \mid T = f(\X) \text{ and } \X \leftarrow H_0].
\end{equation*}
\end{definition}

The $p$-value provides context for the observed statistic by positioning it in the range of statistics that could be observed under $H_0$.

Typically, researchers choose a significance threshold $\alpha$ and reject the null hypothesis when their calculated $p$-value is less than $\alpha$. The $\alpha$-level determines the probability of a \textit{type I error}, which occurs when an analyst rejects the null hypothesis despite it being true. The value of the statistic that demarcates this rejection region is called the \emph{critical value}, denoted by \crit. That is, $\Pr[T\geq \crit \mid T = f(\X) \text{ and } \X \leftarrow H_0] = \alpha$.

When one develops a test statistic, a primary goal is to maximize \textit{statistical power}. The power of a test quantifies how effectively it can detect a deviation from $H_0$. It is the probability of rejecting when $H_0$ is false.  Generally, the power depends on both the amount of data and the effect (i.e., how different the true distribution $H_A$ is from $H_0$).

\begin{definition}[Statistical Power] \label{def:power}
For a specific alternate hypothesis $H_A$, the statistical power of a hypothesis test is 
\begin{equation*}
\Pr[T \geq \crit \mid T = f(\X) \text{ and } \X \leftarrow H_A]
\end{equation*}
\end{definition}

\subsection{One-way ANOVA}

Consider again our example from earlier, where a researcher has, for a set of individuals, both blood pressure measurements and the mutation status of a particular gene.  This is a classic setting for a one-way analysis of variance (ANOVA) test.

Index each individual person or observation in a database $\x$ with $i \in \{1, \ldots, N\}$. Each $i$ is associated with a group or category $c_i$ (e.g., mutation status) and a numerical value $y_i \in \mathbb{R}$ (e.g., blood pressure).  We use $\bar{y}$ to represent the mean of all $N$ numerical values. Index each group with $j \in \{1, \ldots, k\}$. Each $j$ is associated with $C_j = \{i \mid c_i=j\}$, the set of indices of observations in group $j$. Denote the size of the set $n_j$ and the mean of the values in that set $\bar{y}_j$.  (Note that this means $\bar{y}_{c_i}$ is the mean of numerical values in the same group as observation $i$.)

The null hypothesis in a one-way ANOVA is that the $y_i$ follow identical normal distributions regardless of their group. This motivates the test statistic used in an ANOVA test, the $F$-statistic, which measures the ratio of the variation between group means (weighted by the size of the groups) and the variation of individuals within each group. If all groups had equal means, the variation between group means would be proportional to the variation between individual observations.\footnote{Our approach relies upon simulating normally distributed data in correspondence with the traditional normality assumption. The one-way ANOVA test is known to be robust to deviations from normality~\cite{schmider2010}, so our approach should be applicable even in settings where the normality assumption is suspect. Readers interested in more about ANOVA generally are referred to \cite{cox1974theoretical}.}

\begin{definition}[$F$-Statistic] \label{def:fstat}
Given a database \x with $k$ groups and \dbsize total entries, the $F$-statistic is the ratio of two values, traditionally called $\ssa(\x)$ and $\sse(\x)$.  The Sum of Squared errors of All category means ($\ssa$) is a measure of variance between group means, weighted by group size:
\begin{equation*}
\ssa(\x) = \sum_{j=1}^{\k} n_j (\bar{y}_j - \bar{y})^2.
\end{equation*}
The Sum of Squared Errors of all observations ($\sse$) is a measure of variance within groups:
\begin{equation*}
\sse(\x) = \sum_{i=1}^{\dbsize}  (y_{i}-\bar{y}_{c_i})^2.
\end{equation*}
The $F$-statistic is the ratio of \ssa and \sse, each divided by their respective degrees of freedom.  These adjusted values are called  the Mean Sum of All category errors (\msa) and Mean of Sum of Squared Errors (\mse).
We can now finish defining the $F$-statistic.
\begin{equation*}
F(\x)  = \frac{\ssa(\x)/(\k - 1)}{\sse(\x)/(\dbsize - \k)} = \frac{\msa(\x)}{\mse(\x)}.
\end{equation*}
\end{definition}

Under the null hypothesis that the $y_i$ follow identical normal distributions regardless of their group, the reference distribution of the $F$-statistic is known exactly. This comes from recognizing that $\ssa(\x)$ is drawn from $\sigma^2\chi_{\k-1}^2$, the chi-squared distribution with $\k-1$ degrees of freedom scaled by within-group variance, and $\sse(\x)$ is drawn from $\sigma^2 \chi^2_{\dbsize - \k}$.  The ratio of these values, therefore, has a reference distribution that is scale-free (not dependent on $\sigma$) and can be calculated knowing only $N$ and $k$.

\subsection{Differential Privacy}
Differential privacy is a security definition for the release of information about a database of private records. Here we outline the foundational definitions and theorems in differential privacy; everything below first appeared in the seminal paper of Dwork et al.~\cite{dwork2006calibrating}.

Suppose we have a database \x containing sensitive information that we want to study. In particular, we want to publish the output $f(\x)$ of a function $f$ (also sometimes called a \textit{mechanism}) on our database while protecting the privacy of the individuals whose data was collected. Differential privacy promises that an adversary will learn approximately nothing about an individual as a result of their presence in \x.  Informally, this is done by requiring that the probability of seeing any particular output is roughly the same regardless of what information a given individual submitted to the database.

As above, we use \dbsize to represent the number of rows in \x, where a ``row'' is simply the set of data associated with a single individual.  We now define \textit{neighboring databases}, which differ in only one row.

\begin{definition}[Neighboring Databases]\label{def:Neighboring} Two databases \x and $\x'$ are \textit{neighboring} if \x can be transformed to $\x'$ by changing only one individual's data (where a change is an in-place modification, not a full addition or removal).
\end{definition}

To protect the privacy of individuals in a database, differentially private requires that the output of a query on any two neighboring databases should look nearly identical.

\begin{definition}[Differential Privacy] \label{def:diffpriv}
A (randomized) mechanism $f$ with range R is \emph{$\eps$-differentially private} if for all $S \subseteq R$ and for all neighboring databases \x and $\x'$
\begin{equation*}
\text{\emph{Pr}}[f(\x) \in S] \leq e^{\eps} \text{\emph{Pr}}[f(\x') \in S]. 
\end{equation*}
\end{definition}

The parameter \eps is called the \textit{privacy parameter}, and its choice is a policy decision. The lower the chosen \eps, the stronger the privacy guarantee. Note that because neighboring databases are the same size, $N$ can always be released without compromising privacy.\footnote{Differential privacy can also be defined in terms of databases that differ by an addition/deletion, rather than by a change in a row.  For most applications these definitions are equivalent except for a change in $\epsilon$ by a factor of two (with the version here being the more stringent interpretation of $\epsilon$). The one significant difference is that the other definition does not result in $N$ being public, which is important for our work here.}

Differential privacy has several useful properties.  One of the most useful is composition:

\begin{theorem}[Composition]\label{thm:composition} Suppose $f$ and $g$ are respectively $\eps_{1}$- and $\eps_{2}$-differentially private mechanisms. Then, a mechanism $h$ that returns the results of applying $f$ and $g$ to \x, $h(\x) = (f(\x), g(\x))$, is $(\eps_{1} + \eps_{2})$-differentially private.
\end{theorem}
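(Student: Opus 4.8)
The plan is to work directly from Definition~\ref{def:diffpriv}, reducing the joint statement about $h$ to the two separate guarantees already assumed for $f$ and $g$. The key structural fact I will use is that $f$ and $g$ draw their random coins independently, so that the joint distribution of $h(\x) = (f(\x), g(\x))$ factors as a product of the two marginals. I would first treat the case of a countable range, where I can reason directly with probability masses, and then indicate how the same argument extends to continuous ranges by replacing masses with densities.

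First I would fix a pair of neighboring databases $\x$ and $\x'$ and a single output pair $(a,b)$ in the range of $h$. Because the coins are independent,
\[
\Pr[h(\x) = (a,b)] = \Pr[f(\x) = a]\cdot \Pr[g(\x) = b].
\]
Applying the privacy guarantee for $f$ with the singleton set $\{a\}$ and that for $g$ with the singleton set $\{b\}$ gives $\Pr[f(\x)=a] \le e^{\eps_1}\Pr[f(\x')=a]$ and $\Pr[g(\x)=b] \le e^{\eps_2}\Pr[g(\x')=b]$. Multiplying these two inequalities and factoring the right-hand side back into a joint probability (again by independence) yields the pointwise bound
\[
\Pr[h(\x) = (a,b)] \le e^{\eps_1 + \eps_2}\,\Pr[h(\x') = (a,b)].
\]

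To finish, I would extend this from singletons to an arbitrary event $S \subseteq R$. Summing the pointwise inequality over all $(a,b) \in S$ gives $\Pr[h(\x) \in S] \le e^{\eps_1+\eps_2}\Pr[h(\x') \in S]$, which is precisely the $(\eps_1+\eps_2)$-differential privacy condition. For a continuous range the same steps go through verbatim after the sum over $S$ is replaced by an integral, since independence guarantees that the joint density factors in exactly the same way.

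I expect the main obstacle to be less a computation than a matter of making the hypotheses precise: the statement implicitly requires that $f$ and $g$ use independent randomness, and without this the factorization step fails. The only real bookkeeping is that Definition~\ref{def:diffpriv} quantifies over events $S$, whereas the factorization is most natural pointwise; the crux is therefore passing to singletons (or densities), multiplying, and then recovering the bound for general $S$ by summation (or integration). No part of the argument is numerically heavy, so I would keep the writeup short and emphasize the independence assumption.
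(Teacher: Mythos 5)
Your proposal is correct, and it is the standard argument for non-adaptive composition. Note that the paper itself gives no proof of this theorem: it is stated as background, attributed to the seminal work of Dwork et al., so there is no in-paper proof to compare against. Your route --- factor the joint distribution using independence of the coins of $f$ and $g$, apply each privacy guarantee pointwise, multiply, and then sum (or integrate) over the event $S$ --- is exactly the textbook proof. The only place to be careful is the continuous case: there, singleton events have probability zero, so you cannot literally ``apply the privacy guarantee with the singleton set $\{a\}$''; instead you must first pass from the set-based Definition~\ref{def:diffpriv} to an almost-everywhere bound on the ratio of densities (a standard measure-theoretic step), and then your multiplication-and-integration argument goes through as you describe. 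You flag this informally with ``replacing masses with densities,'' which is the right idea, but a fully rigorous writeup would make that reduction explicit rather than claiming the steps carry over verbatim. You are also right to emphasize that independence of the randomness in $f$ and $g$ is what licenses the factorization; that hypothesis is implicit in the theorem statement $h(\x)=(f(\x),g(\x))$, and the conclusion in fact survives even adaptive composition (where $g$ may see the output of $f$), though that requires conditioning rather than factoring.
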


In other words, the privacy guarantee decreases, but does not disappear, when a database is queried multiple times.  Composition allows database administrators to issue researchers a privacy budget, which researchers can then divide up as they wish between any number of different queries.

Another defining feature of differential privacy is its resistance to \textit{post-processing}.

\begin{theorem}[Post-Processing] \label{thm:postprocessing}
Let $f$ be an $\eps$-differentially private mechanism, and let $g$ be an arbitrary function. Then, $h(\x) = g(f(\x))$ is also $\eps$-differentially private.
\end{theorem}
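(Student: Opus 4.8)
The plan is to reduce the claim about the composed mechanism $h$ directly to the differential privacy guarantee already assumed for $f$, exploiting the fact that post-processing cannot touch the private database \x except through the already-privatized output $f(\x)$. Fix an arbitrary measurable output set $S$ in the range of $h$ and an arbitrary pair of neighboring databases \x and \xprime. The goal is to establish the inequality $\Pr[h(\x)\in S]\le e^{\eps}\Pr[h(\xprime)\in S]$ demanded by Definition~\ref{def:diffpriv}, since $S$ and the neighboring pair are then arbitrary.

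First I would treat the case where $g$ is deterministic, which matches the literal statement of an ``arbitrary function.'' The key observation is that the event $\{g(f(\x))\in S\}$ occurs exactly when $f(\x)$ lands in the preimage $T=g^{-1}(S)=\{r\in R: g(r)\in S\}$, a well-defined subset of the range $R$ of $f$. Hence $\Pr[h(\x)\in S]=\Pr[f(\x)\in T]$ and likewise for \xprime. Because $f$ is $\eps$-differentially private and $T\subseteq R$, applying Definition~\ref{def:diffpriv} to the set $T$ immediately gives $\Pr[f(\x)\in T]\le e^{\eps}\Pr[f(\xprime)\in T]$, and rewriting both sides back in terms of $h$ yields the desired inequality.

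I would then remark on the general case in which $g$ is itself randomized, since post-processing is most useful in that form. Here I would condition on the internal randomness of $g$: writing $g$ as a family of deterministic maps $g_\omega$ indexed by a random seed $\omega$ drawn independently of \x, each $g_\omega$ satisfies the deterministic bound above with the \emph{same} constant $e^{\eps}$. Integrating the inequality $\Pr[g_\omega(f(\x))\in S]\le e^{\eps}\Pr[g_\omega(f(\xprime))\in S]$ against the distribution of $\omega$ preserves the factor $e^{\eps}$ and recovers the bound for $h$.

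The main obstacle is not the algebra, which is essentially a one-line substitution, but making the preimage argument fully rigorous in a general measure-theoretic setting: one must ensure that $T=g^{-1}(S)$ is a measurable subset of $R$ so that $f$'s guarantee genuinely applies to it, and in the randomized case that the interchange of the integral over $\omega$ with the probability over $f$'s output is justified (via Fubini/Tonelli, using independence of $g$'s randomness from \x). In the discrete setting these concerns vanish entirely and the proof collapses to the substitution in the deterministic case above.
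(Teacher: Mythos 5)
Your proof is correct, and it is the standard argument: reduce $h$ to $f$ via the preimage $T=g^{-1}(S)$, apply the defining inequality of differential privacy to $T$, and handle randomized $g$ by conditioning on its internal seed and averaging, which preserves the factor $e^{\eps}$. Note, however, that the paper itself offers no proof of this theorem to compare against: it is stated as background, attributed to the original work of Dwork et al., and is used later only as a black box (in Theorems~\ref{thm:AlgPriv} and~\ref{thm:fullTestPriv}). So there is no ``paper's route'' for you to diverge from; your argument is the one the cited literature gives. Two small remarks: first, in the paper's formulation Definition~\ref{def:diffpriv} quantifies over \emph{all} subsets $S$ of the range, i.e., the setting is implicitly discrete, so the measurability caveat you flag (that $g^{-1}(S)$ must be a set to which $f$'s guarantee applies) indeed evaporates, exactly as you say; in a general measure-theoretic treatment one would simply build measurability of $g$ into the hypothesis, since otherwise $h$ is not even a well-defined mechanism. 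Second, your randomized case silently uses that the seed $\omega$ is independent of the database and of $f$'s coins; that assumption is what licenses writing $\Pr[h(\x)\in S]=\mathrm{E}_\omega\bigl[\Pr[g_\omega(f(\x))\in S]\bigr]$, and it is worth stating explicitly since it is the only place the argument could go wrong.
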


This theorem allows us to do any computation we desire on the output of our differentially-private mechanism without diminishing the privacy guarantees. We will utilize this property to compute $p$-values of the private $F$-statistic.  The $p$-values will be automatically private without additional argument.

Our algorithms are constructed by taking building blocks and combining them with composition and post-processing, but the fundamental building blocks are made private using the Laplace mechanism, the oldest and maybe simplest method for achieving differential privacy.  The Laplace mechanism allows the conversion of any function $f$ into a private approximation $\hat{f}$. One must first compute (or bound) the \textit{sensitivity} of the function, the maximum effect on the output that a single row can have.

\begin{definition}[Sensitivity] \label{def:sensitivity}
The sensitivity of a (deterministic) real-valued function $f$ on databases is the maximum of  $\lvert f(\x) - f(\x') \rvert$ taken over all pairs $(\x, \x')$ of neighboring databases.\footnote{Sensitivity and the Laplace mechanism can be defined on functions with output in $\mathbb{R}^n$, but we only need the one-dimensional version.}
\end{definition}

The Laplace mechanism will use random noise drawn from the Laplace distribution.

\begin{definition}[Laplace Distribution] \label{def:laplacedist}
The Laplace Distribution (centered at 0) with scale $b$ is the distribution with probability density function
\begin{equation*}
\lap(z\mid b) = \frac{1}{2b}\text{\emph{exp}}\bigg({-\frac{\lvert z \rvert}{b}}\bigg).
\end{equation*}
We use $\lap(b)$ to represent the Laplace distribution with scale $b$.
\end{definition}

We can now present the Laplace mechanism.

\begin{theorem}[Laplacian Mechanism] \label{thm:lapmechanism}
Let $f$ be a function with sensitivity bounded above by $s$. Let $L$ be a random variable drawn from $\lap(s/\eps)$. Then the function $\hat{f}(\x) = f(\x) + L$ is \eps-differentially private. 
\end{theorem}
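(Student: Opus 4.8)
The plan is to establish the inequality of Definition~\ref{def:diffpriv} directly. For a fixed input \x, the output $\hat{f}(\x) = f(\x) + L$ is just a Laplace random variable recentered at the constant $f(\x)$, so it has a probability density; write $p_{\x}(z)$ for the density of $\hat{f}(\x)$ evaluated at $z$. Because $\Pr[\hat{f}(\x) \in S] = \int_S p_{\x}(z)\, dz$, it suffices to bound the pointwise ratio $p_{\x}(z)/p_{\xprime}(z)$ by $e^{\eps}$ for every $z$ and every pair of neighboring databases \x and \xprime; integrating that bound over an arbitrary $S$ then gives the theorem.

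The main work is the density-ratio computation. Substituting the scale $b = s/\eps$ into the Laplace density of Definition~\ref{def:laplacedist} gives
\begin{equation*}
\frac{p_{\x}(z)}{p_{\xprime}(z)} = \exp\!\left(\frac{\eps}{s}\Big(|z - f(\xprime)| - |z - f(\x)|\Big)\right).
\end{equation*}
Here is the one step that actually invokes the hypotheses: by the reverse triangle inequality the bracketed difference is at most $|f(\x) - f(\xprime)|$, and by the sensitivity bound (Definition~\ref{def:sensitivity}) this quantity is at most $s$ whenever \x and \xprime are neighboring. Hence the ratio is at most $\exp\!\big((\eps/s)\cdot s\big) = e^{\eps}$, uniformly in $z$.

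I expect the conceptual crux — and the reason the scale is chosen as $s/\eps$ rather than anything else — to be exactly this cancellation: the factor $\eps/s$ in the exponent is precisely calibrated to convert the worst-case shift $s$ into the target quantity $\eps$. Everything else is routine. The final step integrates the pointwise inequality $p_{\x}(z) \le e^{\eps} p_{\xprime}(z)$ over $S$ to conclude $\Pr[\hat{f}(\x) \in S] \le e^{\eps}\Pr[\hat{f}(\xprime) \in S]$, which is the definition of \eps-differential privacy. Because the density bound holds for every $z$, no subtlety arises from the choice of the measurable set $S$, so the reduction in the first paragraph goes through without further argument.
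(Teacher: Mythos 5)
Your proof is correct. The paper itself states this theorem without proof---it is background material quoted from Dwork et al.'s original work---so there is no in-paper argument to compare against; your density-ratio computation, the reverse triangle inequality combined with the sensitivity bound, and the final integration over the measurable set $S$ constitute the standard (and essentially the only) proof of this result.
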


\subsection{Differentially Private Hypothesis Testing}
In order to create a differentially private hypothesis test, we need a private function $f$ of a database to serve as our test statistic. This could be a differentially private estimate of an existing test statistic, or it could be a new test statistic altogether. Because randomization is essential to differential privacy, $f$ will be randomized. The same statistic on the same database may yield different outputs each time it is computed.

In addition to a test statistic $f$, we require a suitable reference distribution to calculate the corresponding $p$-value. While it may be tempting to compute the $p$-value using the reference distribution for the non-private statistic one is estimating, this may yield wildly inaccurate results~\cite{campbell2018diffprivanova}, because adding noise to the statistic increases the probability of outlier output values. Instead, we must compute the reference distribution for the noisy statistic. Only then can we calculate an accurate $p$-value. 

The goal of differentially private hypothesis testing is to create a private test statistic and method of computing the $p$-value that maximizes statistical power, ideally approaching the power of the equivalent test in the classical non-private setting.

\subsection{Related Work}

There has been a moderate amount of work on differentially private hypothesis testing, but because there are many hypothesis tests most individual tests have received only a small amount of attention, and some very common tests have not seen a private analogue developed at all.

Several papers have addressed testing the value of a mean or the difference of means \cite{solea2014differentially, d2015differential, ding2018comparing}.  Hypothesis tests using coefficients of a linear regression to test for dependence between continuous variables is extremely common in many academic disciplines, but only recently has a method for carrying this analysis out privately been developed \cite{sheffet2015differentially, barrientos2017differentially}. and Nguy{\^e}n and Hui propose a test for surival analysis data \cite{nguyen2017differentially}.  There is one prior work on private ANOVA testing, that of  Campbell et al.~\cite{campbell2018diffprivanova}.  We will discuss this result in greater depth in the next section.

The chi-squared test, which tests for the independence of two categorical variables, has received the most study.  Vu and Slavkovi\'{c} \cite{vu2009differential} give an analogue to the test and also compute accurate $p$-values.   Many private chi-squared tests have been specifically motivated by genome-wide association studies (GWAS)  \cite{fienberg2011privacy, uhlerop2013privacy, johnson2013privacy}.  These give p-value calculations, but they are only accurate in the limit as $N$ grows large.  Other work has used Monte Carlo simulations (as we do in this work) to give more accurate reference distributions for small $N$ \cite{gaboardi2016differentially, wang2015revisiting}.  Rogers and Kifer \cite{rogers2017new} instead propose a new statistic with an asymptotic distribution more similar to its non-private analogue.  We note that this is one of few papers that, like the present work, proposes test statistics intended for the private setting, rather than simply approximating the accepted test statistic from the classic public setting. Very few of these papers carefully measure the power of the test they develop.  Rogers and Kifer \cite{rogers2017new} and Gaboardi et al.~\cite{gaboardi2016differentially} are notable exceptions, giving power curves for several different approaches.

There is also a significant body of work looking at how quickly private approximations of test statistics converge to their limiting distributions (e.g., \cite{smith2008efficient, wasserman2010statistical, smith2011privacy}).  These are important theoretical results, but they do not usually yield practical tests.  Unless $N$ is very large (in which case the details of the test do not matter very much anyway) the distribution of the test statistic is not close enough to that of the standard public to allow accurate computation of $p$-values.

\subsection{Prior work on private ANOVA}

The only previous work on differentially private ANOVA testing that the authors are aware of is Campbell et al.~\cite{campbell2018diffprivanova} Using the ANOVA test as defined above, they analyze the sensitivity of the \ssa and \sse with the assumption that all data was normalized to be between $0$ and $1$ and add Laplacian noise proportional to these sensitivities to the public computation of the \ssa and \sse. Their algorithm then uses post-processing to calculate the noisy $F$-statistic, and returns this in addition to the noisy \ssa and \sse (Algorithm~\ref{alg:Fhat}). 
\begin{algorithm}
    \begin{algorithmic}
        \STATE Compute $\widehat{\text{SSA}} = \text{SSA} + Z_1$ where $Z_1\sim\text{Lap}\left(\frac{7 - 9/N}{\eps/2}\right)$
        \STATE Compute $\widehat{\text{SSE}} = \text{SSE} + Z_2$ where $Z_2\sim\text{Lap}\left(\frac{5-4/N}{\eps/2}\right)$
        \STATE Compute $\widehat{F} = \frac{\widehat{\text{SSA}}/(\k-1)}{\widehat{\text{SSE}}/(\dbsize-\k)}$
        \STATE return $\widehat{F}, \widehat{\text{SSA}}, \widehat{\text{SSE}}$
    \end{algorithmic}
    \caption{private\_F($\x, \epsilon$)} 
     \label{alg:Fhat}
\end{algorithm}

Normally, the $F$ distribution is used to calculate a $p$-value for the $F$-statistic. However, Campbell et al.~find that the distribution of the private estimate $\widehat{F}$ differs too much from the $F$ distribution for this to be acceptable.  Furthermore, they find that it is no longer scale-free, meaning that the distribution depends on the within-group variance $\sigma^2$.  

Fortunately, the \sse is an estimate of $\sigma$, so using this estimate they computed an estimated distribution on $\widehat{F}$ through simulation. They could then compare a given value of $\widehat{F}$ to this distribution to obtain a $p$-value.

To assess the power of their private ANOVA algorithm, they simulate databases with three equal-sized groups with values drawn from $\normal(0.35, 0.15),$ $ \normal(0.5, 0.15)$, and $\normal(0.65, 0.15)$ respectively. For several $(\dbsize,\eps)$-pair choices, they generate many sets of data, apply the private ANOVA test, calculate the $p$-value, and record the percentage of simulations with $p$-values less than 0.05. They find that when $\eps = 1$, they need over five thousand data points to detect this effect (compared to two or three dozen data points in the public setting).  Our goal in this paper is to reduce the gap between the public and private setting.

\section{A New Test Statistic}\label{sec:f1}

In this section we describe our hypothesis test.  This begins with the introduction in Section \ref{subsec:f1def} of $F_1$, a new test statistic for the ANOVA setting.  In Section \ref{subsec:privf1} we then show how to privately compute a private approximation of $F_1$.  Finally, in Section \ref{subsec:alg-method} we calculate $p$-values for the private $F_1$ statistic.  This means simulating a correct reference distribution against which we can compare our output.

\subsection{The $F_1$ Statistic}\label{subsec:f1def}

Our goal is to define a statistic that releases similar information as the $F$-statistic, but has higher power (Definition~\ref{def:power}) for reasonable privacy guarantees.  We focused on two approaches to improve the power of the ANOVA calculation: reducing the amount of Laplacian noise by decreasing sensitivity, and making \ssa and \sse numerically larger so that the noise has less influence over the total value of the statistic.  We achieved both goals by taking the absolute values of the summand terms in the \ssa and \sse, rather than squaring them. 

As before, let \k denote the number of categories in the database, $c_i$ be the category and $y_i$ be the numerical value associated with observation $i$, $n_j$ be the size of category $j$, and $N$ be the total size of the data set. Additionally, $\bar{y}$ is the grand mean of the entries in database \x and $\bar{y}_j$ is the mean of the entries in group $j$. 
\begin{definition}[$F_1$-statistic]\label{def:f1stat}
Given a database \x with $k$ groups and \dbsize total entries, the $\sa(\x)$ and $\se(\x)$ calculations are defined as follows:
\begin{align*}
\sa(\x)  &= \sum_{j = 1}^{k} n_{j} \lvert \bar{y} - \bar{y}_{j} \rvert \\
\se(\x) &= \sum_{i=1}^N \lvert y_{i} - \bar{y}_{c_i} \rvert.
\end{align*}
The $F_1$-statistic is the ratio of \sa and \se, each divided by their respective degrees of freedom. 
\begin{equation*}
F_1(\x) = \frac{\sa(\x)/(k-1)}{\se(\x)/(\dbsize-k)}.
\end{equation*}
\end{definition}

The $F_1$-statistic measures variation between group means compared to variation within groups in essentially the same way as the original $F$-statistic. The \sa grows as the group means diverge.  What constitutes a ``large'' variation between group means depends on the variation between individual items, so \se, which measures this individual-level variation, provides a sense of scale for the \sa value.

In the next section, we show that the sensitivities of \sa and \se in the $F_1$-statistic are less than half as large as the sensitivities of \ssa and \sse in the original $F$-statistic. Further, because the summand terms are restricted to $[0,1]$, the \sa and \se values are larger, meaning they can tolerate the addition of more noise before losing their usefulness.

\subsection{A Private Approximation of $F_1$}\label{subsec:privf1}

The sensitivity of $F_1$ is very high.  (In the worst case, $\se(\x)$ is almost zero and very small changes can have huge effects on $F_1(\x)$.)  As a result, we can't simply apply the Laplace mechanism to $F_1$.  Instead, we choose to apply it individually to the \sa and \se functions, and then use composition and post-processing to compute an estimate of $F_1$.  We must therefore bound the sensitivities of \sa and \se.

We assume that the number of valid category values, \k, is fixed and public, but the number of entries in each group is not.  (This includes the possibility that one or more categories exist as valid entries but do not appear in the actual database.)   We also assume that there are maximum and minimum possible values for the data, and that the computation first uses these to normalize the data, mapping it to the interval $[0,1]$.

\begin{theorem}[SE Sensitivity] \label{thm:SEsens}
The sensitivity of the \se calculation in Definition~\ref{def:f1stat} is bounded above by 3.
\end{theorem}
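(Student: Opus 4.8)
The plan is to fix an arbitrary pair of neighboring databases $\x, \xprime$, track exactly which summands of $\se$ change, and bound the total change by splitting it into three independent pieces, each of size at most $1$. Passing from $\x$ to $\xprime$ modifies a single row, which we can view as one observation (call it $i^*$) leaving its old group $a$ and joining a new group $b$, possibly with a new value and possibly with $a = b$. Since $\se(\x) = \sum_i |y_i - \bar{y}_{c_i}|$ and the only group means that move are $\bar{y}_a$ and $\bar{y}_b$, every summand for an observation outside groups $a$ and $b$ is identical in $\x$ and $\xprime$ and cancels. First I would therefore write $\se(\xprime) - \se(\x)$ as a sum of three terms: the change in $i^*$'s own deviation term, the aggregate change across the \emph{other} members of group $a$ (whose mean shifts when $i^*$ is removed), and the aggregate change across the \emph{other} members of group $b$ (whose mean shifts when $i^*$ is inserted).

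The crux is to bound each of these three pieces by $1$. For $i^*$'s own term, both its contribution in $\x$, namely $|y_{i^*} - \bar{y}_a|$, and its contribution in $\xprime$ lie in $[0,1]$ (the data is normalized to $[0,1]$ and group means are averages of such values), so the two differ by at most $1$. For the other members of group $a$, I would show the group mean moves by at most $1/(n_a - 1)$ when the single value $y_{i^*}$ is removed from a group of size $n_a$, using $\bar{y}_a' - \bar{y}_a = (\bar{y}_a - y_{i^*})/(n_a - 1)$, and then apply the reverse triangle inequality $\bigl||y_i - \bar{y}_a'| - |y_i - \bar{y}_a|\bigr| \le |\bar{y}_a' - \bar{y}_a|$ to each of the $n_a - 1$ remaining members. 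The key cancellation is that summing this bound over all $n_a - 1$ remaining members gives $(n_a - 1)\cdot \tfrac{1}{n_a-1} = 1$. The identical argument for group $b$ (inserting into a group of size $n_b$ moves the mean by at most $1/(n_b+1)$, applied to its $n_b$ prior members) gives a bound of at most $1$.

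Summing the three bounds yields $|\se(\xprime) - \se(\x)| \le 3$, and since $\x, \xprime$ were arbitrary neighbors, this bounds the sensitivity. Degenerate cases are immediate: if group $a$ contained only $i^*$ it becomes empty and has no remaining members to perturb, and an initially empty group $b$ contributes nothing from its (nonexistent) prior members, so those pieces contribute $0$; the mean-shift bounds are only invoked when the relevant group actually has the members summed over.

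I expect the main obstacle to be the bookkeeping that isolates $i^*$'s own deviation term and counts it exactly once. A naive split into an independent ``removal'' cost and ``insertion'' cost double-counts the moved point and bounds each at $2$, giving $4$ rather than $3$; grouping the two mean-shift effects separately from the single net change in $i^*$'s own term is what caps each of the three pieces at $1$. The one genuine computation is the mean-shift bound together with its cancellation against the number of affected members. A sharper, non-modular analysis can in fact lower the constant further, but this three-way split is the cleanest route to the stated bound of $3$.
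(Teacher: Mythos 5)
Your proposal is correct and follows essentially the same route as the paper's proof: partition the change in \se into the moved row's own term, the remaining members of its old group, and the prior members of its new group, then bound each piece by $1$ by multiplying a mean-shift bound by the number of affected summands. The only differences are cosmetic --- the paper bounds the group-$a$ mean shift by $1/(n_a+1)$ (size including the moved row) rather than your slightly coarser $1/(n_a-1)$, so its per-group contributions come out strictly below $1$, but both yield the stated bound of $3$.
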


\begin{proof}

Suppose neighboring databases \x and \xprime differ by some row $r$.
Say that in \x, $c_r = a$, and in \xprime, $c_r = b$. 
Let $n_a$ be the size of category $a$ excluding $r$ and let $n_b$ be the size of category $b$ excluding $r$. We begin by expressing the \se calculation as nested summations indexing over group size and entries within each group.

$$ \se(\x) = \sum_{j=1}^k \sum_{i \in C_j}  \lvert y_{i} - \bar{y}_{c_i} \rvert. $$

Call $t_{i} = \lvert y_{i} - \bar{y}_{c_i} \rvert$, and let $c_i \neq a,b$. Then, $t_{i}$ does not change between \x and \xprime. Now, suppose $i \ne r$ but $c_i = a$. It follows that $\Delta t_{i} \le 1/(n_a+1)$, since the only change comes from $\bar{y}_{a}$. There are $n_a$ such terms, so the total contribution from these terms is at most $n_a/(n_a+1)$. If $a \neq b$, we must also consider $t_{i}$ where $i \ne r$ but $c_i = b$, for which we have $\Delta t_{i} \le 1/(n_b+1)$. Thus, the terms in groups $a$ and $b$ excluding row $r$ together contribute 
$$ \frac{n_a}{n_a+1} + \frac{n_b}{n_b+1} \leq 2$$
if $a \neq b$ and just $n_a / (n_a+1) < 1$ otherwise.

Now, consider $\Delta t_{r}$. Since $y_{r}$, $\bar{y}_{a}$, and $\bar{y}_b$ are all in the interval $[0,1]$, the difference between $t_{r}$ in database \x and in \xprime is at most 1. Thus, the total sensitivity of \se is bounded above by 3.

\end{proof}

\begin{theorem}[SA Sensitivity] \label{thm:SAsens}
The sensitivity of the SA calculation in Definition~\ref{def:f1stat} is bounded above by 4. 
\end{theorem}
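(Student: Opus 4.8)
The plan is to mirror the structure of the proof of Theorem~\ref{thm:SEsens}, while accounting for a new complication: every summand of \sa depends on the \emph{global} grand mean \yb, so a single changed row perturbs all \k terms at once, not just the terms attached to groups $a$ and $b$.

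First I would fix neighboring databases \x and \xprime differing in row $r$, where $r$ belongs to category $a$ with value $y_r$ in \x and to category $b$ with value $y_r'$ in \xprime (allowing $a=b$). As in the \se proof, let $n_a$ and $n_b$ be the sizes of categories $a$ and $b$ excluding $r$, and let $S_j$ denote the sum of the values in group $j$. The key algebraic move is to rewrite each summand as $n_j\lvert\yb-\yj\rvert=\lvert n_j\yb-S_j\rvert$, since $n_j\yj=S_j$. This recasts \sa as $\sum_j\lvert n_j\yb-S_j\rvert$, a sum in which the grand mean and the per-group sums enter linearly, so the effect of the row change is easy to isolate.

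The core of the argument is a two-step hybrid decomposition. I would first change only the grand mean, passing through the intermediate quantity $\sum_j\lvert n_j\yb'-S_j\rvert$ that uses the group sizes and sums of \x but the grand mean $\yb'$ of \xprime, and then change only the group structure of $a$ and $b$. Because \dbsize is the same in both databases, the grand mean moves by at most $\lvert y_r-y_r'\rvert/N\le 1/N$. By the reverse triangle inequality the first step changes each term by at most $n_j\lvert\yb-\yb'\rvert$, and since $\sum_j n_j=N$ the \emph{total} change from the grand-mean shift is at most $N\cdot(1/N)=1$, regardless of how many groups there are. In the second step only the terms for $a$ and $b$ move: adding or removing row $r$ changes the argument $n_j\yb'-S_j$ by exactly $\pm(\yb'-y_r)$ for group $a$ and $\pm(\yb'-y_r')$ for group $b$, each of magnitude at most $1$ since all values lie in $[0,1]$. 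This contributes at most $1$ from each of the two groups.

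Summing the three contributions bounds the total change by $1+1+1=3$, comfortably within the claimed bound of $4$ (and the $a=b$ case is strictly smaller, since only one group's sum changes). The main obstacle is exactly the global coupling through \yb: the naive worry is that perturbing all \k summands could accumulate against the group sizes and blow up, and the crux is the observation that the grand-mean weights $n_j$ sum to precisely \dbsize while \yb itself moves by only $1/N$, so their product, the aggregate grand-mean effect, stays bounded by a constant independent of \k and of the group sizes. The linear reformulation $\lvert n_j\yb-S_j\rvert$ is what makes both the grand-mean step and the group-$a$/$b$ step clean; without it one must track simultaneous changes to $n_j$, \yj, and \yb nested inside a single absolute value.
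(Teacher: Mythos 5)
Your proof is correct, and it takes a genuinely different route from the paper's---one that in fact yields a strictly better constant. The paper expands \sa as a double sum over individual rows, $\sum_j \sum_{i \in C_j} \lvert \yb - \yj \rvert$, and performs a three-step hybrid: the grand-mean update (at most $N \cdot 1/N = 1$), the group-mean updates for $a$ and $b$ (at most $1$ each, since $n_j$ terms each move by at most $1/n_j$), and finally the change in row $r$'s own term (at most $1$), for a total of $4$. Your linear reformulation $n_j \lvert \yb - \yj \rvert = \lvert n_j \yb - S_j \rvert$ collapses the paper's second and third steps into one: when row $r$ leaves group $a$, the argument $n_a \yb' - S_a$ shifts by exactly $y_r - \yb'$, so the group-mean change and the membership change are absorbed into a single reverse-triangle-inequality bound of $1$ per affected group. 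This gives a total bound of $3$ rather than $4$ (and your check that the $a=b$ case only improves matters is also right). Since $3 \le 4$, this certainly proves the stated theorem, but it does more: if adopted, it would let the noise scale in Algorithm~\ref{alg:F1} be reduced from $4/(\rho\eps)$ to $3/(\rho\eps)$, a direct (if modest) gain in statistical power. The one point worth stating explicitly in a final write-up is that the reformulation is valid even when a group is empty ($n_j = 0$ makes both sides zero), and that $\yb' \in [0,1]$ because all normalized values lie in $[0,1]$, which is what bounds $\lvert y_r - \yb' \rvert$ and $\lvert y_r' - \yb' \rvert$ by $1$.
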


\begin{proof}
Again, suppose neighboring databases \x and \xprime differ by some row $r$ in both the categorical and numerical values, with $c_r = a$ in \x, and $c_r = b$ in \xprime. Denote the number of entries in groups $a$ and $b$ not including row $r$ by $n_a$ and $n_b$, respectively. We begin by expressing the \sa calculation as two sums indexing over groups and entries within each group.
$$
\sa(\x)  = \sum_{j = 1}^{k}\sum_{i \in C_j} \lvert \overline{y} - \overline{y}_{c_i} \rvert
$$

Consider the change from $\sa(\x)$ to $\sa(\xprime)$ as if it occurred in two steps.  In the first step, the grand mean $\overline{y}$ is updated.  Note that in the worst case, $\bar{y}$ can change by at most $1/\dbsize$ between \x and \xprime. Then, since there are $N$ summands, each including the grand mean, this step changes the value by a maximum of $N(1/N)=1$ to the overall sensitivity. 

In the second step we change the group means for groups $a$ and $b$.  There will be $n_a$ terms containing $\overline{y}_a$, each of which will change by at most $1/n_a$, changing the overall value by at most 1.  Similarly updating $\overline{y}_b$ changes $n_b$ terms each by at most $1/n_b$ for a total contribution of 1.  

Finally, the change of $r$ between \x and \xprime contributes 1 to the overall sensitivity, and thus the sensitivity of the \sa is bounded above by 4.

\end{proof}

Having proven these sensitivities, we can now introduce our private algorithm (Algorithm \ref{alg:F1}) for approximating $F_1$.  The algorithm first estimates \sa and \se, allocating part of the privacy budget to each one. We introduce a parameter $\rho \in (0,1)$ that determines the relative amount of the privacy budget spent on each intermediate value.  The optimal value of $\rho$ will be experimentally determined in Section \ref{subsec:optrho}.  We note that in addition to using a different test statistic, the prior work by Campbell et al.~did not consider $\rho$ values other than 0.5.

\begin{algorithm}
    \caption{private\_F1($\x,\eps, \rho$) \label{alg:F1}}
    \begin{algorithmic}
        \STATE $\widehat{\text{SA}} = \text{SA}(\x) + L_1$ where $L_1\sim\text{Lap}\left(\frac{4}{\rho\eps}\right)$ 
        \STATE $\widehat{\text{SE}} = \text{SE}(\x) + L_2$ where $L_2\sim\text{Lap}\left(\frac{3}{(1-\rho)\eps}\right)$
        \STATE  $\widehat{F_1} = \frac{\widehat{\text{SA}}/(k-1)}{\widehat{\text{SE}}/(N-k)}$
        \STATE return $\widehat{F_1}, \widehat{\text{SA}}, \widehat{\text{SE}}$
    \end{algorithmic}
\end{algorithm}

\begin{theorem} \label{thm:AlgPriv}
Algorithm \ref{alg:F1} is \eps-differentially private.
\end{theorem}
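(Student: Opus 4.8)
The plan is to assemble the claim entirely from the differential-privacy building blocks already established: the two sensitivity bounds (Theorems~\ref{thm:SAsens} and~\ref{thm:SEsens}), the Laplace mechanism (Theorem~\ref{thm:lapmechanism}), composition (Theorem~\ref{thm:composition}), and post-processing (Theorem~\ref{thm:postprocessing}). No new estimates are needed; the work is in matching each noise scale to the correct privacy parameter and then bookkeeping the budget.

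First I would treat the two noisy intermediate releases separately. The \sa function has sensitivity at most $4$ by Theorem~\ref{thm:SAsens}, and Algorithm~\ref{alg:F1} perturbs it with noise drawn from $\lap(4/(\rho\eps))$. Writing this scale in the form $s/\eps'$ with $s=4$ forces $\eps'=\rho\eps$, so Theorem~\ref{thm:lapmechanism} gives that the release of $\widehat{\text{SA}}$ by itself is $\rho\eps$-differentially private. Symmetrically, \se has sensitivity at most $3$ by Theorem~\ref{thm:SEsens}, and the scale $3/((1-\rho)\eps)$ identifies its privacy parameter as $(1-\rho)\eps$, so the release of $\widehat{\text{SE}}$ by itself is $(1-\rho)\eps$-differentially private.

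Next I would combine these. The algorithm draws $L_1$ and $L_2$ independently and releases the pair $(\widehat{\text{SA}},\widehat{\text{SE}})$, which is precisely the setting of Theorem~\ref{thm:composition}: the joint release is $(\rho\eps+(1-\rho)\eps)$-differentially private, and the two terms sum to exactly $\eps$ for every $\rho\in(0,1)$. Finally, the reported $\widehat{F_1}$ is obtained from $(\widehat{\text{SA}},\widehat{\text{SE}})$ by the deterministic map $(\widehat{\text{SA}},\widehat{\text{SE}})\mapsto \frac{\widehat{\text{SA}}/(\k-1)}{\widehat{\text{SE}}/(\dbsize-\k)}$. The only quantities this map uses beyond the two private inputs are \k and \dbsize, both of which are public (the number of categories is assumed fixed and public, and \dbsize is releasable since neighboring databases have equal size). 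Thus $\widehat{F_1}$ is a function of the already-$\eps$-private output, and Theorem~\ref{thm:postprocessing} shows that appending it — hence the entire returned tuple $(\widehat{F_1},\widehat{\text{SA}},\widehat{\text{SE}})$ — preserves $\eps$-differential privacy.

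I do not expect a genuine obstacle, since the argument is a direct composition of named results. The only point requiring care is the accounting: confirming that each Laplace scale is matched to the right $\eps'$ via its sensitivity so that the composed budget telescopes to exactly $\eps$, and checking that the post-processing step introduces no further dependence on the raw data (the constants \k and \dbsize it consumes are public rather than private).
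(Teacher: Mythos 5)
Your proposal is correct and follows essentially the same route as the paper's own proof: invoke the sensitivity bounds (Theorems~\ref{thm:SEsens} and~\ref{thm:SAsens}) with the Laplace mechanism to get $\rho\eps$- and $(1-\rho)\eps$-privacy for $\widehat{\sa}$ and $\widehat{\se}$, combine via composition (Theorem~\ref{thm:composition}), and treat $\widehat{F_1}$ as post-processing using the public values of $\k$ and \dbsize. Your added bookkeeping about matching each Laplace scale to its privacy parameter is a harmless elaboration of the same argument.
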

\begin{proof}
By the sensitivity bounds of \se and \sa in Theorems \ref{thm:SEsens} and \ref{thm:SAsens}  and the Laplace mechanism (Theorem \ref{thm:lapmechanism}), $\widehat{\sa}$ is $\rho\eps$-differentially private and $\widehat{\se}$ is $(1-\rho)\eps$-differentially private. By the composition theorem (Theorem \ref{thm:composition}), outputting both is \eps-differentially private. Since $k$ and \dbsize are both public information, computing $\widehat{F_1}$ is post-processing (Theorem \ref{thm:postprocessing}).
\end{proof}

\subsection{Reference Distribution and $p$-values}
\label{subsec:alg-method}

As discussed previously, the test statistic on its own is not useful; we need a $p$-value to provide a sense of scale in the context of the null hypothesis. Computing a $p$-value begins with an accurate reference distribution.  We numerically approximate this distribution through simulation.  The intermediate values of the $F$ statistic, \ssa and \sse, are drawn from  $\sigma^2\chi_{k-1}^2$ (for \ssa) and $\sigma^2\chi_{n-k}^2$ (for \sse).  Campbell et al.~\cite{campbell2018diffprivanova} used this to easily sample from the correct distributions for \ssa and \sse.

The \sa and \se values needed to compute $F_1$ follow no similarly tractable distribution, so instead we simulate full databases according to the null hypothesis and for each calculate $\widehat{F_1}$.  The distribution of these $\widehat{F_1}$ values approximates the reference distribution.  The goal, given a database \x, is to simulate databases with the same size $N$ and number of groups $k$, same standard deviation $\sigma$, and same expected value $\mu$.  The values $N$ and $k$ are public, so we can use those values.  The expected value $\mu$ is not, but as long as it is safely inside the $[0,1]$ interval, its value has no effect on the distribution, so we simply always use 0.5. 

Unfortunately, using an accurate $\sigma$ is more difficult.  Unlike in the non-private setting, the reference distribution depends on the choice of $\sigma$, so an inaccurate value can cause incorrect results.  We had two choices: either we could use some of our \eps-budget to directly estimate the standard deviation of the $y_i$ in database \x, or we could devise an indirect method of estimating the standard deviation given $\widehat{\sa}$ and $\widehat{\se}$. See Appendix~\ref{Sec:AppDirSig} for further discussion of the first option.  (It has higher power for low $n$ but takes longer to approach full power.)  Here we focus on the second option by deriving an unbiased estimator $\hat{\sigma}$ for $\sigma$ that can be computed from $\widehat{\se}$.

$$
\hat{\sigma} = \frac{\widehat{SE}}{\tilde{N}}\sqrt{\frac{\pi}{2}}, \text{ where } \tilde{N} = \sum_{j=1}^k n_j \sqrt{\left(1 - \frac{1}{n_j}\right)}
$$

See Appendix \ref{Sec:AppSig} for the proof that this is indeed an unbiased estimator.  

Computing this estimator requires knowledge of each of the group sizes, which are private. Fortunately, $\tilde{N}$ is closely approximated by $N-k$, which we know. At the smallest database sizes that we considered (around $N = 100$ with three equal-sized groups), this approximation has < 1\% error. As the size of the database grows, this error shrinks to zero.  Figure \ref{fig:sigma-estimate} visualizes the unbiasedness and asymptotically shrinking variance of $\hat{\sigma}$. (using the $\tilde{N} = N-k$ approximation).  We further confirm in Section 4 that it is precise enough to compute valid $p$-values.

\begin{figure}
\centering
\includegraphics[width=.7\linewidth]{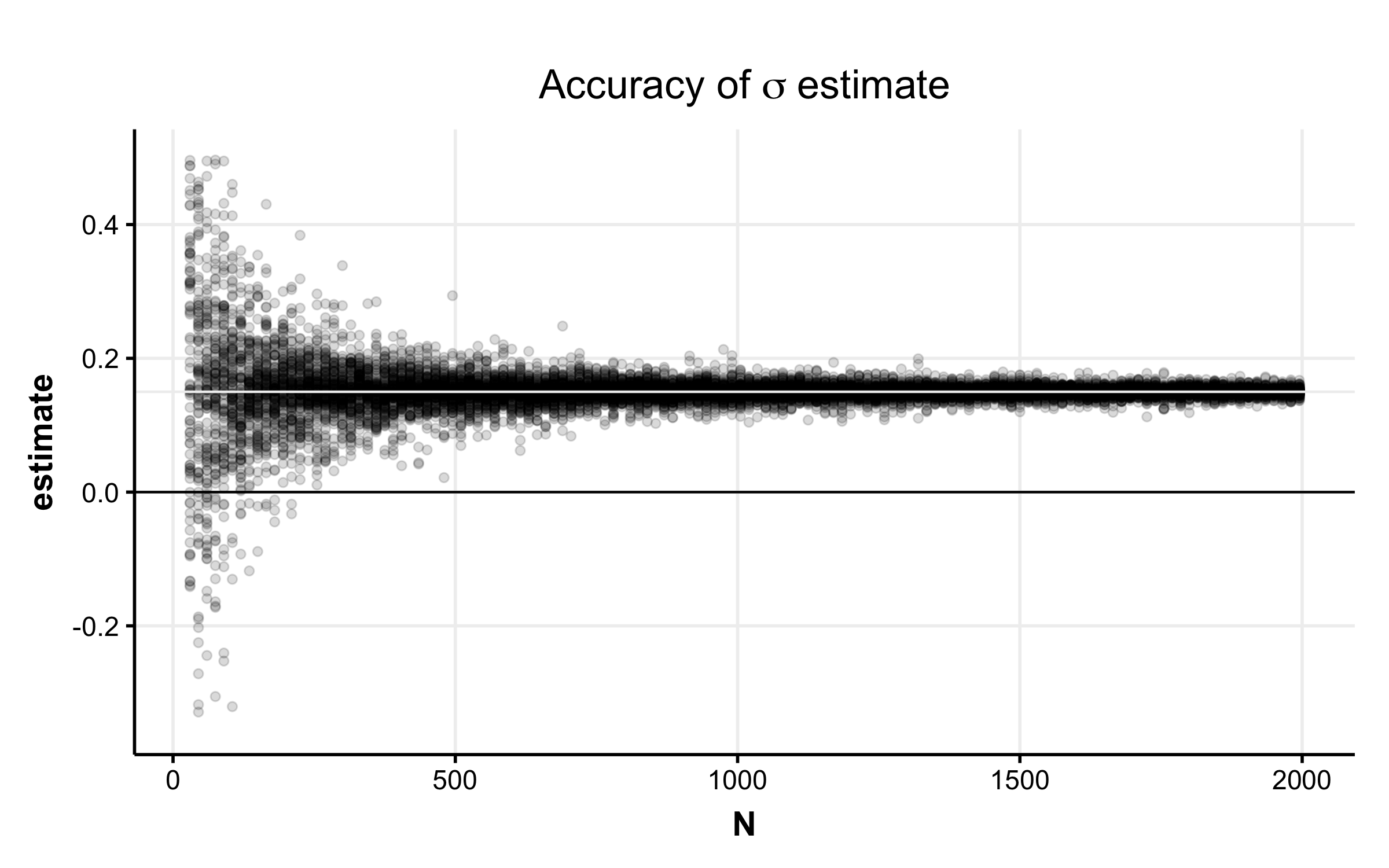}
\caption{$\hat{\sigma}$ is unbiased ($\sigma = .15$) with shrinking variance as N grows large. Each point represents the estimate from a simulated null database. At small N, there is a non-zero probability of returning a negative estimate.\label{fig:sigma-estimate}}
\end{figure}

Another issue presented by the private estimation of $\sigma$ is that the Laplacian noise can be large enough to make the estimate negative. Negative standard deviation estimates are more likely to occur when the \se is small, i.e.~when the database is small or when the within-group standard deviation is small.

This problem is unique to the private setting and we do the most conservative possible thing --- we never reject the null hypothesis when the estimated standard deviation was negative. Our reasoning was that a negative standard deviation has no statistical meaning and any calculations made from such an estimate would be uninformative.  This method for dealing with negative standard deviation estimates results in a type I error rate lower than $\alpha$, which means that our results are more conservative. As was mentioned earlier, the \sa increases as the database size and effect size between groups increases. Thus, negative standard deviation estimates tend to occur for database sizes and effect sizes that are so small that the effect is undetectable anyway.  As a result, this conservative choice does little to reduce the power of our test.

A formal description of the private $F_1$ ANOVA test in presented in Algorithm~\ref{alg:pval}, which returns a Boolean indicating whether the null hypothesis $H_0$ is rejected. 

\begin{algorithm}
    \caption{ANOVA\_test($\x$, $\eps$, $\alpha$, \emph{reps})\label{alg:pval} }
    \begin{algorithmic}
        \STATE $\widehat{F_1}, \widehat{\sa}, \widehat{\se} =$ private\_F1($\x$,$\eps$)
        \IF{$\widehat{\text{SE}}<0$}
        \STATE return \text{False}
        \ENDIF
        \STATE $\widehat{\sigma} = \sqrt{\pi/2}\frac{\widehat{\text{SE}}}{(\dbsize-k)}$
        \STATE $significant = 0$
        \FOR{$i = 1$ to \emph{reps}}
        \STATE  $\x^{ref} = N$ draws from $\normal(0.5, \widehat{\sigma})$ divided into $k$ equal-sized groups
        \STATE $\widehat{F_1}^{ref}, \widehat{\sa}^{ref}, \widehat{\se}^{ref} =$ private\_F1($\x^{ref}, \eps$)
        \IF{$\widehat{F_1}^{ref} > \widehat{F_1} $}
        \STATE $significant = significant + 1$
        \ENDIF
        \ENDFOR
        \STATE $p$-value $= significant/reps$
        \IF{$p$-value $< \alpha$}
        \STATE return \text{True}
        \ELSE 
        \STATE return \text{False}
        \ENDIF
    \end{algorithmic}
\end{algorithm}

\begin{theorem} \label{thm:fullTestPriv}
Algorithm \ref{alg:pval} is \eps-differentially private.
\end{theorem}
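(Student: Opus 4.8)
The plan is to show that Algorithm~\ref{alg:pval} accesses the private database \x exactly once, and that everything happening after that single access is post-processing. Reading the pseudocode, the only step that evaluates a function of \x is the opening call to private\_F1 on input $(\x, \eps)$, which by Theorem~\ref{thm:AlgPriv} is \eps-differentially private and returns the triple $(\widehat{F_1}, \widehat{\sa}, \widehat{\se})$. The point to stress is that the call to private\_F1 appearing inside the counting loop runs on a freshly simulated reference database $\x^{ref}$, never on \x, so it incurs no additional privacy loss; in particular no appeal to the composition theorem is needed here.

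Next I would argue that the remaining lines form a randomized function of that triple together with quantities that are public or independent of \x. The early-return test $\widehat{\se} < 0$, the estimator $\widehat{\sigma} = \sqrt{\pi/2}\,\widehat{\se}/(\dbsize - \k)$, the loop that counts how many reference statistics exceed $\widehat{F_1}$, and the final comparison of the empirical $p$-value against $\alpha$ all depend on \x only through $(\widehat{F_1}, \widehat{\sa}, \widehat{\se})$. The parameters \dbsize, \k, $\alpha$, and \emph{reps} are public (recall that \dbsize and \k are released without cost), and each reference database $\x^{ref}$ is drawn using $\widehat{\sigma}$---which is itself a function of the already-released triple---together with fresh randomness that is independent of \x. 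Hence the entire tail of the algorithm can be written as $g(\widehat{F_1}, \widehat{\sa}, \widehat{\se})$ for a suitable (randomized) map $g$ whose output is the Boolean rejection decision.

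Finally I would invoke post-processing (Theorem~\ref{thm:postprocessing}) to conclude that the composition of the \eps-differentially private mechanism private\_F1 with $g$ is \eps-differentially private, which is exactly Algorithm~\ref{alg:pval}. The one subtlety, and the main obstacle, is that $g$ is randomized---it samples the reference databases---whereas Theorem~\ref{thm:postprocessing} is stated for an arbitrary, implicitly deterministic, function. I would close this gap with the standard reduction: write $g(\cdot) = h(\cdot, R)$, where $R$ collects all of the simulation randomness and $h$ is deterministic. Because $R$ is independent of \x, for each fixed realization of $R$ the map $h(\cdot, R)$ is deterministic post-processing and preserves \eps-differential privacy by Theorem~\ref{thm:postprocessing}; averaging the resulting inequality over the distribution of $R$ preserves the bound, so the randomized post-processing is \eps-differentially private as well. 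This establishes the theorem.
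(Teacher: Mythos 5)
Your proof is correct and follows essentially the same route as the paper's: decompose Algorithm~\ref{alg:pval} as the $\eps$-differentially private call to private\_F1 (Theorem~\ref{thm:AlgPriv}) followed by a function $g$ of its output, then invoke post-processing (Theorem~\ref{thm:postprocessing}). Your additional care in noting that the inner private\_F1 calls act only on simulated data, and in reducing the randomized map $g$ to deterministic post-processing by conditioning on the simulation randomness, makes explicit two points the paper's terser proof leaves implicit.
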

\begin{proof}
This follows immediately from Theorem \ref{thm:AlgPriv} (privacy of the test statistic) and Theorem \ref{thm:postprocessing} (post-processing).  Using the notation of Theorem \ref{thm:postprocessing}, private\_F1 (Algorithm \ref{alg:F1}) is the function $f$, which is proven $\epsilon$-differentially private by Theorem \ref{thm:AlgPriv}.  The rest of the computation is the function $g$, and uses only the output of $f$.  The whole of Algorithm \ref{alg:pval} is therefore the composition of these two functions and as a result is itself private.
\end{proof}

\section{Experimental Results}
\label{sec:results}

In this section we assess the properties and performance of Algorithm~\ref{alg:pval} through simulation. 

\subsection{Properties of the test}\label{subsec:properties}

A $p$-value is considered valid if ~\cite{casellaberger2002}

$$
\Pr[p\textrm{-value} \le \alpha \mid \X \leftarrow H_0] \le \alpha.
$$

In other words, the actual type I error rate must be less than or equal to $\alpha$. This can be assessed by conducting many tests on simulated data sampled according to $H_0$ at various $\alpha$ and checking if the proportion of rejections is less than or equal to $\alpha$. 

Figure~\ref{Fig:valid-pvals} presents these simulations for several choices of $\epsilon$ and demonstrates that the $F_1$ test produces valid $p$-values. The type I error rate is lowest at low $\epsilon$ values because the high privacy guarantee requires a large amount of noise be added to \se, which can lead to a negative estimate for $\sigma$, which in turns leads to an automatic decision to retain $H_0$ (see section~\ref{subsec:alg-method}).

\begin{figure}
\centering
\includegraphics[width=.8\linewidth]{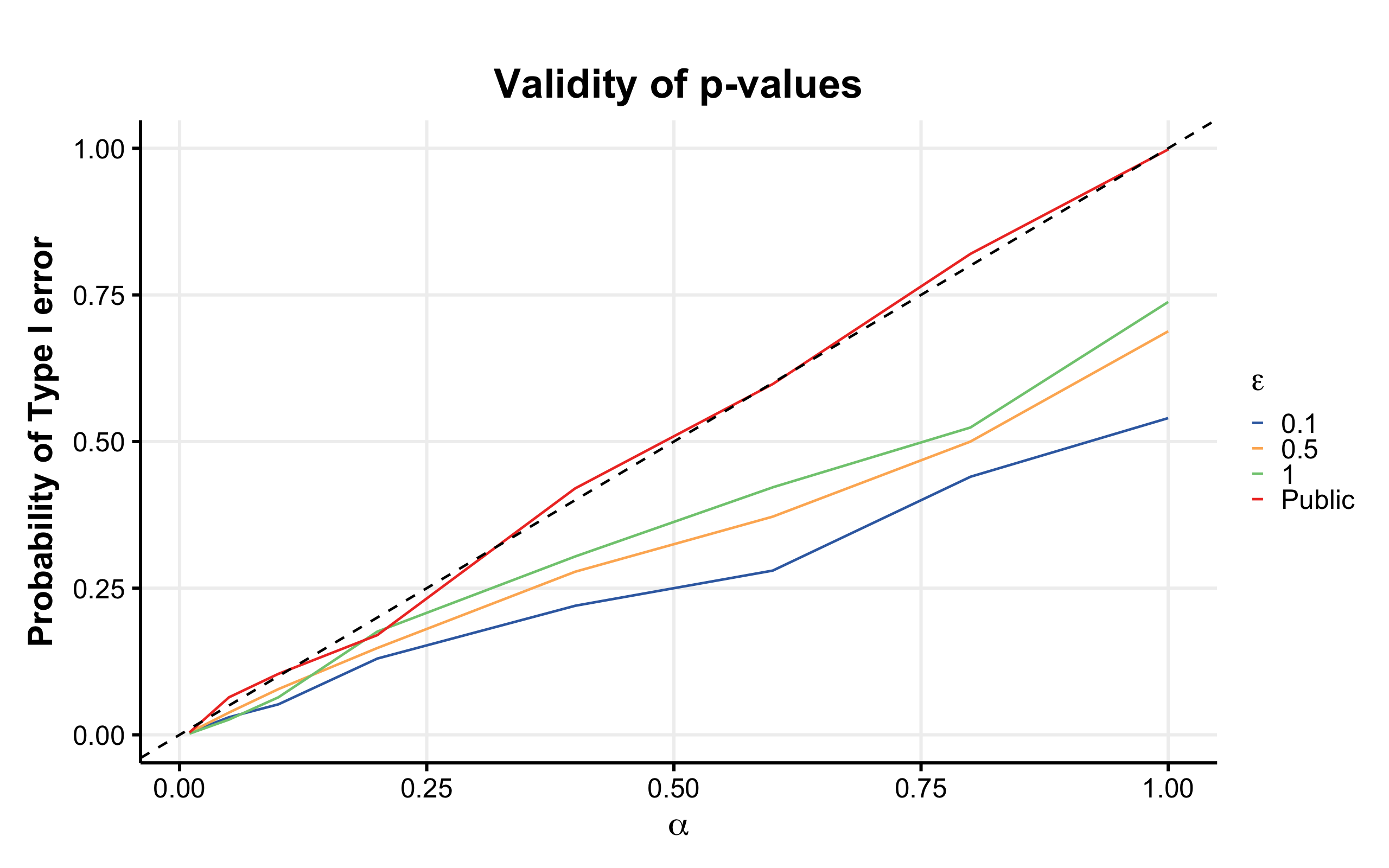}
\caption{The empirical type I error rate under three private scenarios is less than $\alpha$ while the public error rate is exactly $\alpha$ (within MC variability). Each point in a line represents 500 simulated tests, each with $N = 180$, $k=3$, and equal-size groups.\label{Fig:valid-pvals}}
\end{figure}

We note also that when we simulate data for calculating a reference distribution, we always simulate data with equal-size groups.  I.e., we must confirm that the critical value of the reference distribution is highest when groups are of equal size.  Fortunately, this appears to be the case.  Appendix \ref{app:unequal} contains both experimental and theoretical arguments for this claim, though not a complete analytic proof.

\subsection{Optimal \epsfrac}\label{subsec:optrho}

The computation of the private $F_1$ statistic requires the specification of $\epsfrac \in (0, 1)$, the parameter that determines the proportion of the privacy budget that is allocated to \sa relative to \se. We determined an optimal value for $\epsfrac$ by constructing power plots comparing database size to power for different \epsfrac values. We began by exploring the full range from $0.1$ to $0.9$ by $0.1$-increments to get a sense of the range of variability in power. After that initial pass, we tuned in to the value with higher precision. We considered many effect sizes and found that in every case $\epsfrac \approx 0.7$ was the most powerful.

Figure~\ref{Fig:f1-epsfrac} is an example of one of the many scenarios that were considered, which identifies .7 as the optimal value. The scale of the effect of $\rho$ on power is not dramatic but it was persistent across scenarios.

\begin{figure}
\centering
\includegraphics[width=.6\linewidth]{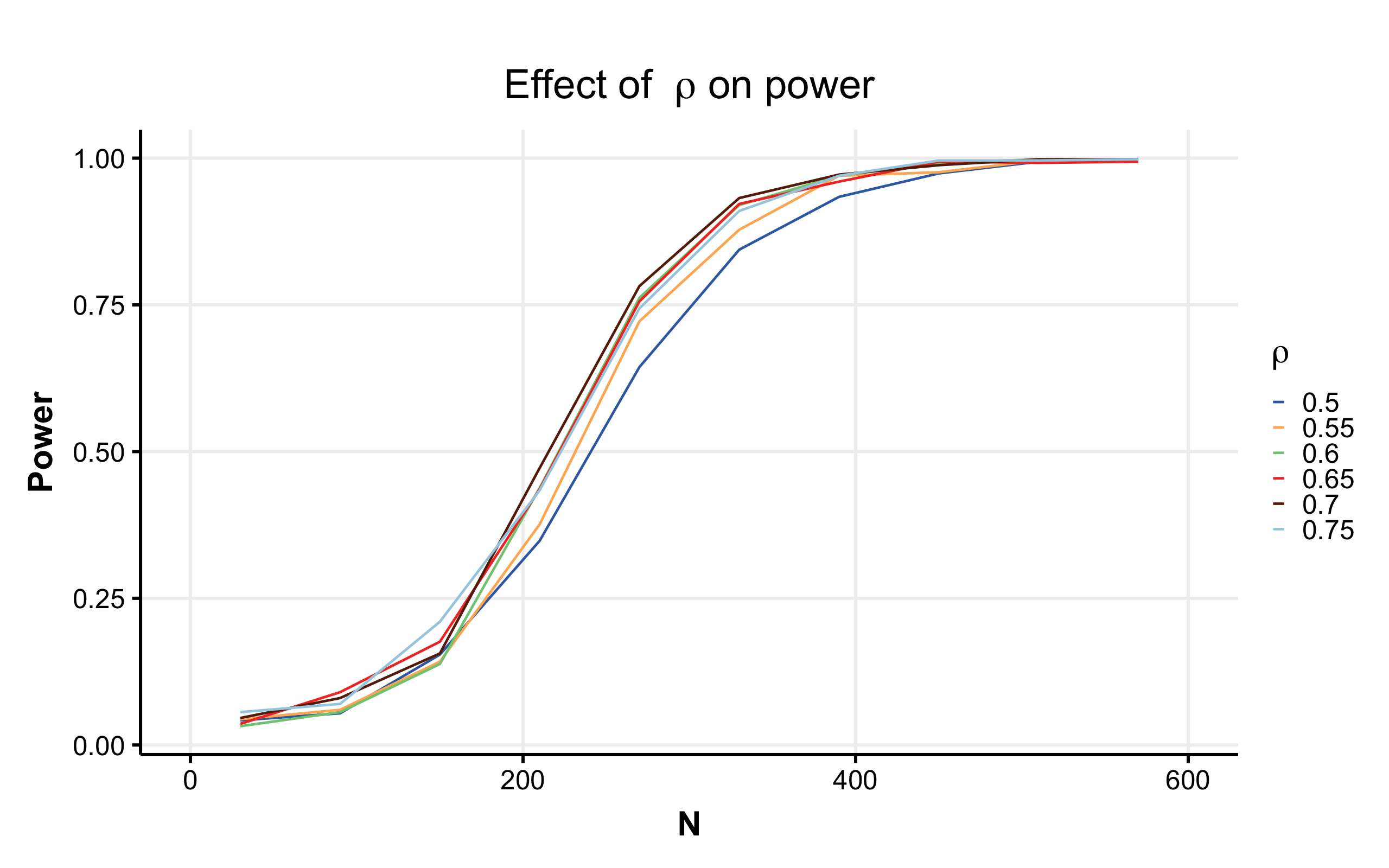}
\caption{Power curves at varying $\rho$ in a setting where $\epsilon = 1$, $k = 3$, $\sigma = .15$, and effect size: $1\sigma$. Power is experimentally maximized when $\rho = .7$.\label{Fig:f1-epsfrac}}
\end{figure}

\subsection{Performance of the test: power}
\label{subsec:power-analysis}

There can be many tests for a given scientific setting that generate valid $p$-values and have identical type I error rates. What distinguishes them is their statistical power, or the probability that they reject $H_0$ when the database comes from a distribution under $H_A$.

The most common way to assess the power of a test is to generate a plot of power as a function either of database size or of effect size. Figure~\ref{Fig:f1} fixes the effect size and shows power curves as a function of database size for four choices of $\epsilon$. An ideal test would very quickly develop power near 1 with very little data. In our private setting, it is clear that the cost of high privacy ($\epsilon = .1$) is roughly an order of magnitude more data than modest privacy ($\epsilon = 1$) to achieve high power. Our private test (for reasonable values of $\epsilon$) still requires much more data than the public version, hundreds of data points as opposed to dozens.

\begin{figure}
\centering
\includegraphics[width=.8\linewidth]{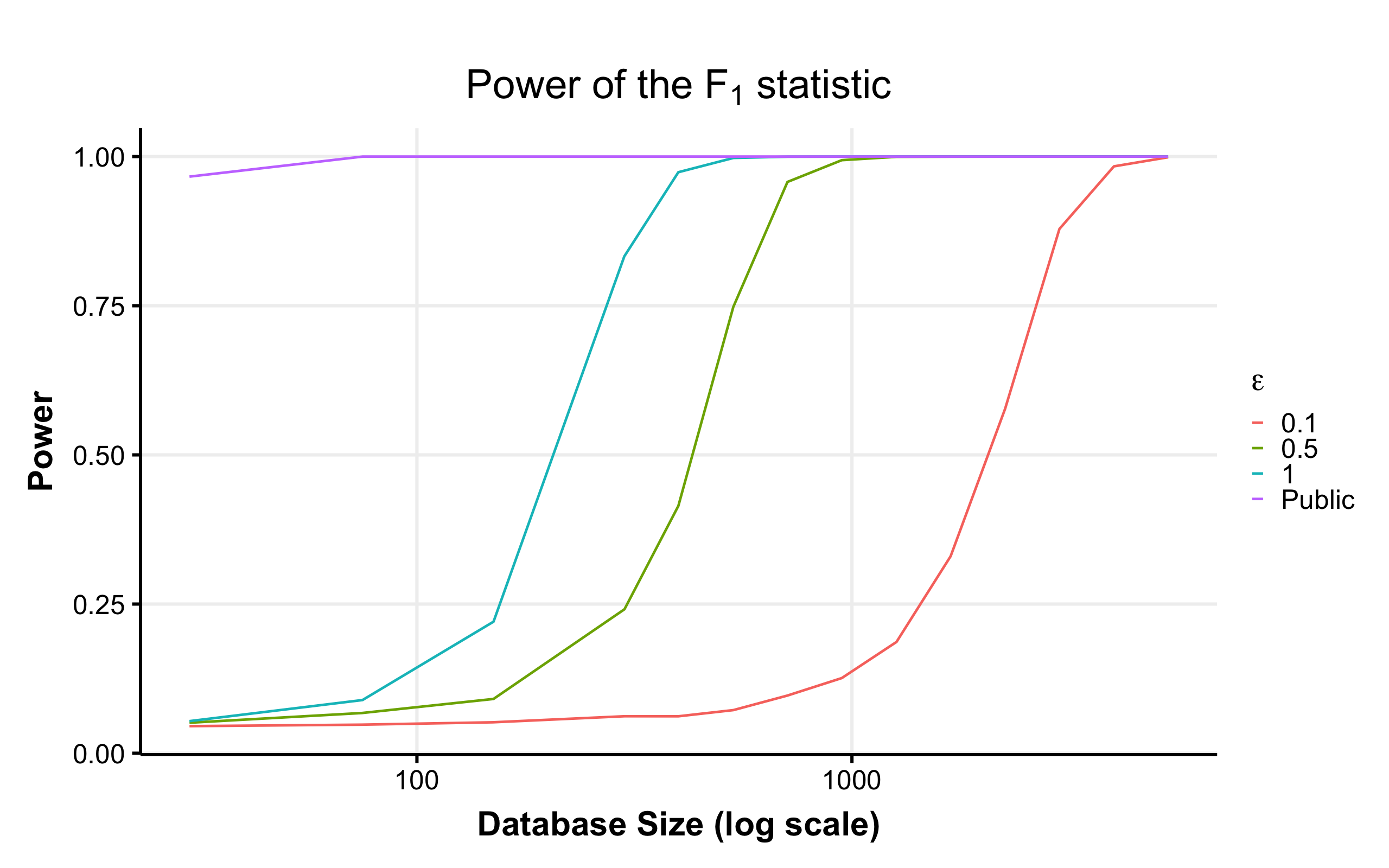}
\caption{Power curves at varying privacy levels in a simulation setting where $k = 3$, $\sigma = .15$, and effect size: $1\sigma$.\label{Fig:f1}}
\end{figure}

Each point in a line of a power curve was computed from 10,000 simulations, each based on a synthetic database of a set sample size and a set effect size (distance between the group means).  We considered the effect $H_A$ where $\k=3$ groups, each distributed $\normal(0.35, 0.15),  \normal(0.5,0.15), \normal(0.65, 0.15)$ and equal group sizes, which was the same scenario as Campbell et al.~\cite{campbell2018diffprivanova}.
Each simulation starts with a draw from $H_A$ and computes a $p$-value as described in Algorithm~\ref{alg:pval}.

The most relevant comparison, however, is between the private $F_1$ statistic and the privatized version of the classical $F_2$ statistic proposed in Campbell et al., the only prior private version of ANOVA.  This comparison can be seen in Figure \ref{Fig:f1-vs-f2}, and the improvement is substantial.  For example, at $\epsilon = 1$, if one wanted to collect enough data to detect this effect with 80\% probability, one would need ~4500 observations if using the prior best test, whereas with the $F_1$ test presented here one would need only ~300 data points, a 15-fold reduction in the necessary amount of data.

\begin{figure}
\centering
\includegraphics[width=.8\linewidth]{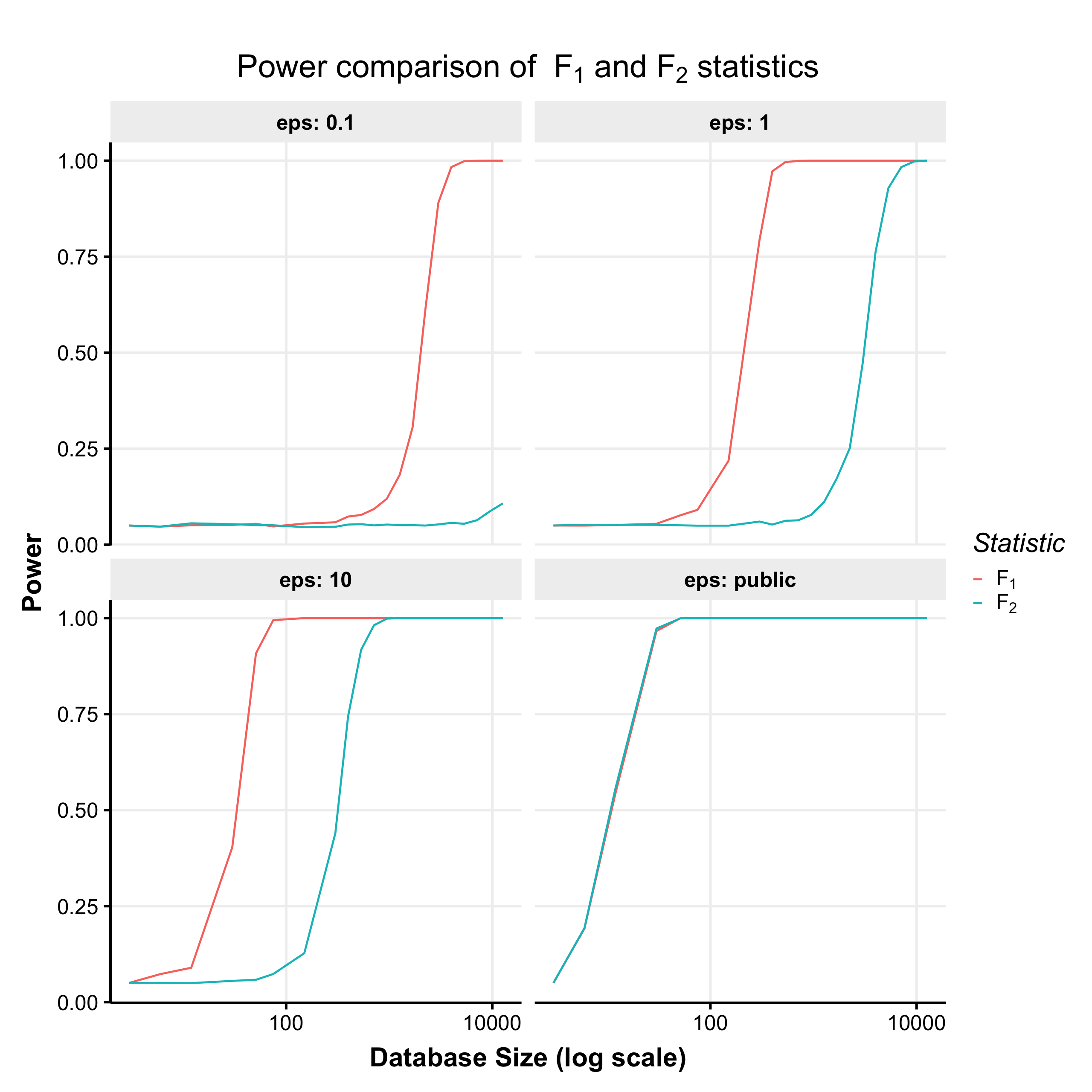}
\caption{Comparison of the power of the new $F_1$-statistic and the prior state of the art test using $F_2$, for three values of $\epsilon$ and in the public setting.\label{Fig:f1-vs-f2}}
\end{figure}

Figure \ref{Fig:f1-vs-f2} also demonstrates the degree to which the $F_1$ is well-suited to the private setting; the greatest improvement in power occurs under high privacy ($\epsilon = 0.1$). As $\epsilon$ grows large, the difference between the two statistics shrinks. In the public setting, they are nearly indistinguishable, though the $F_2$ is narrowly more powerful at every database size.\footnote{This empirical result is consistent with theoretical results in the classical statistics literature that discuss conditions in which the $F_2$ is a most powerful test~\cite{cox1974theoretical}.}

The improvement in power is the greatest practical contribution of our work: the ability to conduct a private ANOVA with an order of magnitude less data than the existing approach. This improvement can be attributed to two key characteristics of our new test. The first and most important is the notion of measuring distance using the $L^1$ norm.  The second is the unequal apportionment of the privacy between the \sa and \se terms.

\section{Other considerations}
\label{sec:considerations}

In this section we present a straightforward generalization of the classic $F$ and $F_1$ to allow other exponents. In Section \ref{subsec:fq} we show that in this generalized class of statistics $F_1$ is indeed optimal. In Section \ref{subsec:params} we discuss the experimental exploration of the full parameter space that guides the formulation of the $F_1$ statistic.

\subsection{Varying the exponent}\label{subsec:fq}
As seen is the previous sections, the change from squaring the differences in the original $F$-test to taking their absolute value with the $F_1$-statistic improved power significantly in the differentially private setting. It is not obvious that switching the exponent from 2 to 1 is optimal --- perhaps some other exponent is superior. An exponent of 0 is clearly horrible, so there must be a local maximum in the power of the statistic for some exponent between 0 and 2.

In order to determine which exponent is in fact optimal, we further generalize the notion of an $F$-test.  We define \sqa and \sqe, which are equivalent to \ssa and \sse except that the summand is raised to the $q^{\text{th}}$ exponent, and we call the resulting statistic $F_q$.  Note that $F_1$ (as defined earlier) is a special case of $F_q$ for $q=1$, and $F_2$ is the standard $F$-statistic.

\begin{definition}[$F_q$] \label{def:Fq} 
Given a database \x with \k groups and \dbsize total entries, define \sqa and \sqe as follows:
\begin{equation*}
\sqa(\x) = \sum_{j=1}^k n_j \left\vert \overline{y}_j - \overline{y} \right\vert^q
\end{equation*}
\begin{equation*}
\sqe(\x) = \sum_{i=1}^N \left\vert y_i - \overline{y}_{c_i} \right\vert^q
\end{equation*}
Then, $F_q$ is defined as
\begin{equation*}
F_q(\x) = \frac{\sqa/(\k-1)}{\sqe/(\dbsize-\k)}
\end{equation*}
\end{definition}
We must now create a private approximation of $F_q$ for arbitrary $q$.  To do this, we first bound the sensitivity of the \sqa and \sqe with the following two theorems, the proofs of which can be found in Appendix \ref{sec:fqsensitivity}.
\begin{theorem}[\sqe Sensitivity] \label{thm:SQEsens} 
The sensitivity of \sqe is bounded above by
\begin{equation*}
2\bigg(\frac{\dbsize}{2}\bigg)^{(1-q)} + 1
\end{equation*}
when $q \in (0,1)$ and
\begin{equation*}
\dbsize - \dbsize\bigg(1-\frac{2}{\dbsize}\bigg)^q +1 
\end{equation*}
when $q\geq 1$. Note that both give an upper bound of 3 when $q=1$.
\end{theorem}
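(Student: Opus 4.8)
The plan is to mirror the structure of the $q=1$ argument in Theorem~\ref{thm:SEsens}, generalizing each step that relied on the linear summand to the nonlinear summand $|a-b|^q$. First I would fix neighboring databases \x and \xprime that differ in a single row $r$, whose category changes from $a$ to $b$ (and whose numerical value may also change), and write $\sqe(\x)=\sum_{j}\sum_{i\in C_j}|y_i-\bar y_{c_i}|^q$. Only the terms for groups $a$ and $b$ are affected, since every other group keeps its membership and mean. I would then split $|\sqe(\x)-\sqe(\xprime)|$ into two pieces: the single summand belonging to row $r$ itself, and the changes induced in the \emph{retained} summands of groups $a$ and $b$ by the shift in their group means. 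Because $y_r$ and the relevant group means all lie in $[0,1]$, the row-$r$ summand lies in $[0,1]$ in both databases, so it contributes at most $1$; this accounts for the additive ``$+1$'' common to both regimes.

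The heart of the proof is bounding the retained-term contribution. As in the $q=1$ case, removing row $r$ from group $a$ shifts its mean $\bar y_a$ by at most $1/(n_a+1)$, and symmetrically adding $r$ to group $b$ shifts $\bar y_b$ by at most $1/(n_b+1)$. The new ingredient is a bound on how far a single summand $|y_i-\bar y_a|^q$ can move when $\bar y_a$ shifts by an amount $d$, given that both distances stay in $[0,1]$. Here the two regimes genuinely diverge. For $q\ge 1$ the map $t\mapsto t^q$ is convex, and the largest change of $t^q$ over a step of length $d$ inside $[0,1]$ is attained at the top of the interval, giving a per-term bound of $1-(1-d)^q$. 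For $q\in(0,1)$ the map is concave and subadditive, so $|\,|u|^q-|v|^q\,|\le|u-v|^q\le d^q$, a per-term bound of $d^q$. Substituting $d\le 1/(n_a+1)$ and summing over the $n_a$ retained members of group $a$ yields a per-group bound depending only on $q$ and the group size.

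The final step is to maximize the sum of the two per-group bounds subject to the affected group sizes totaling at most $\dbsize$ (they are disjoint and exclude row $r$). In both regimes this per-group bound is a concave function of the group size, so a balancing argument shows the sum is maximized near the split $n_a=n_b\approx \dbsize/2$; carrying the constants through the resulting expression produces $2(\dbsize/2)^{1-q}$ for $q\in(0,1)$ and $\dbsize-\dbsize(1-2/\dbsize)^q$ for $q\ge 1$. Adding the $+1$ from the row-$r$ term gives the two claimed bounds, and I would close by verifying the stated consistency: both formulas collapse to $3$ at $q=1$ (and, as a further sanity check, the $q\ge1$ formula reproduces the $5-4/\dbsize$ sensitivity of \sse at $q=2$).

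I expect the main obstacle to be the convex/concave per-term estimate for $q\ge1$ combined with the size optimization. A naive term-by-term derivative bound only yields the weaker constant $2q$, so obtaining the sharper form $\dbsize-\dbsize(1-2/\dbsize)^q$ requires using the extremal-step bound $1-(1-d)^q$ and then arguing via concavity that the balanced split is genuinely worst-case, rather than settling for a crude maximum. Secondary care is needed for degenerate configurations --- the row $r$ staying in its group ($a=b$), which merely reduces the number of affected groups, and empty or singleton groups, where the mean is undefined or a summand vanishes --- but these only shrink the bound and can be absorbed into the $a\ne b$ analysis exactly as in Theorem~\ref{thm:SEsens}.
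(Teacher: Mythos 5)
Your proposal follows essentially the same route as the paper's own proof: the same decomposition into the row-$r$ summand (contributing the $+1$) plus the retained summands of groups $a$ and $b$, the same per-term estimates ($d^q$ via concavity for $q\in(0,1)$ and $1-(1-d)^q$ via convexity for $q\ge 1$, where $d$ is the group-mean shift), and the same concavity/balancing argument showing the worst case is the split $n_a=n_b=N/2$. The differences are cosmetic --- your subadditivity justification of the concave-case bound, the slightly tighter $1/(n_a+1)$ mean-shift, and the nice $q=2$ sanity check recovering $5-4/N$ --- so the proposal is correct and matches the paper's approach.
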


\begin{theorem}[\sqa Sensitivity]\label{thm:SQAsens} The sensitivity of \sqa is bounded above by 
\begin{equation*}
\dbsize\bigg(\frac{3}{\dbsize}\bigg)^q + 1
\end{equation*}
when $q \in (0,1)$ and
\begin{equation*}
\dbsize-\dbsize\bigg(1-\frac{3}{\dbsize}\bigg)^q + 1
\end{equation*}
when $q \geq 1$. Note that both give an upper bound of 4 when $q = 1$.
\end{theorem}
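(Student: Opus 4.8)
My plan is to mirror the two-step argument used for the $q=1$ case in Theorem~\ref{thm:SAsens}, but to track how the map $z\mapsto z^{q}$ amplifies or dampens a fixed perturbation of its argument. Fix neighboring databases \x and \xprime differing in row $r$, whose category switches from $a$ to $b$, and let $n_a, n_b$ denote the sizes of those groups excluding $r$. Writing $\sqa(\x)=\sum_{i=1}^{\dbsize}\bigl|\bar{y}_{c_i}-\bar{y}\bigr|^{q}$ and abbreviating $w_i=\bigl|\bar{y}_{c_i}-\bar{y}\bigr|\in[0,1]$, I would first peel off the single summand belonging to row $r$: its argument lies in $[0,1]$ in both databases, so that term changes by at most $1$, which is exactly the $+1$ in both claimed bounds. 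It remains to bound the aggregate change of the other $\dbsize-1$ summands.

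The second ingredient is a $q$-independent bound on the total displacement of the arguments. For each $i\ne r$, the reverse-triangle inequality $\bigl||u|-|v|\bigr|\le|u-v|$ gives $|w_i(\x)-w_i(\xprime)|\le|\Delta\bar{y}_{c_i}|+|\Delta\bar{y}|$. Summing the grand-mean contribution over the $\dbsize-1$ terms gives at most $(\dbsize-1)/\dbsize<1$; the $n_a$ terms of group $a$ each see $|\Delta\bar{y}_a|\le 1/(n_a+1)$ for a total below $1$, and group $b$ contributes below $1$ symmetrically. Hence $\sum_{i\ne r}|w_i(\x)-w_i(\xprime)|\le 3$, which is the source of the constant $3$ appearing throughout.

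The heart of the proof is then a one-dimensional optimization that I would set up as a relaxation: among all ways to move at most $\dbsize$ values in $[0,1]$ with total displacement at most $3$, maximize $\sum_i\bigl|w_i^{q}-(w_i')^{q}\bigr|$. Since the real shifts form one feasible point, this optimum is a valid upper bound. As $z\mapsto z^{q}$ is increasing, by symmetry I may assume all arguments move in the same direction, and the analysis splits on curvature. For $q\ge 1$ the function is convex with steepest slope near $1$, so the extremal configuration raises arguments from $1-\delta_i$ to $1$; the profile $\delta\mapsto 1-(1-\delta)^{q}$ is concave, so Jensen forces an equal split $\delta_i=3/\dbsize$ over all \dbsize terms, giving $\dbsize\bigl[1-(1-3/\dbsize)^{q}\bigr]$. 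For $q\in(0,1)$ the function is concave with steepest slope near $0$, so the extremal configuration raises arguments from $0$ to $\delta_i$; since $\delta\mapsto\delta^{q}$ is concave, equal spreading again wins and yields $\dbsize(3/\dbsize)^{q}$. Adding back the $+1$ from row $r$ produces the two stated expressions.

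I expect the optimization to be the main obstacle, for two reasons. First, I must justify that spreading the displacement budget over more summands only increases the objective --- concretely that $n\mapsto n\,\phi(3/n)$ is increasing for a concave profile $\phi$ with $\phi(0)=0$ (so that using all \dbsize terms, rather than the $\dbsize-1$ actually available, is a legitimate upper bound). Second, I must confirm the boundary feasibility $3/\dbsize\le 1$ and the correct concavity of the two profile functions so that Jensen applies in the right direction; the remaining verification that both formulas collapse to $4$ at $q=1$ is then immediate.
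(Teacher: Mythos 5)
Your proposal is correct, and its key step is genuinely different from the paper's. Both proofs share the same skeleton: peel off row $r$ (contributing the $+1$, since its summand lies in $[0,1]$ in both databases), and for the remaining terms use the curvature of $z\mapsto z^{q}$ to place the worst-case perturbation near $z=0$ when $q\in(0,1)$ and near $z=1$ when $q\geq 1$. Where you diverge is in how the perturbations are organized and optimized. The paper keeps the group structure explicit: each of the $\dbsize-n_a-n_b-1$ terms in unaffected groups has its argument moved by at most $1/\dbsize$ (grand mean only), each term in groups $a$ and $b$ by at most $1/n_j+1/\dbsize$, and the resulting two-parameter expression is maximized over $(n_a,n_b)$ by calculus --- differentiating, locating the critical point at $n_a=n_b=\dbsize/2$, and checking a second-derivative condition. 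You instead collapse everything into a single aggregate constraint: the total $L^{1}$ displacement of all arguments is at most $3$ (at most $1$ from the grand mean summed over all terms, at most $1$ from each affected group mean summed over its members), and then you solve the relaxed problem ``maximize the total change in $z^{q}$ subject to a displacement budget of $3$'' by Jensen's inequality applied to the concave profiles $\delta\mapsto\delta^{q}$ and $\delta\mapsto 1-(1-\delta)^{q}$. This is a valid relaxation (the true perturbation is one feasible point, and correlations among terms in the same group can only restrict the feasible set), and it lands on exactly the same two bounds, $\dbsize(3/\dbsize)^{q}+1$ and $\dbsize\bigl(1-(1-3/\dbsize)^{q}\bigr)+1$. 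What your route buys is a shorter and more conceptual optimization: concavity makes it transparent that spreading the budget equally over as many terms as possible is worst, replacing the paper's derivative computations; the two caveats you flag --- that $n\mapsto n\,\phi(3/n)$ is nondecreasing for concave $\phi$ with $\phi(0)=0$ (so passing from $\dbsize-1$ to $\dbsize$ terms is sound), and that $3/\dbsize\leq 1$ --- are both easily verified (the first via monotonicity of $\phi(t)/t$, the second assuming $\dbsize\geq 3$, an assumption the paper's proof also makes implicitly). What the paper's route buys is that the dependence on the group sizes $n_a,n_b$ stays visible throughout, so the claim that the worst case occurs when the database is split evenly between the two affected groups is established directly rather than absorbed into an aggregate budget.
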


Given these sensitivity bounds, we can calculate a private approximation of $F_q$ for simulated data using the same algorithm as for $F_1$, but with the sensitivities altered according to the choice of $q$.  We can also simulate a reference distribution by adapting Algorithm~\ref{alg:pval}, which was done to construct Figure \ref{fig:fqpower}.  As is clear from these results, in terms of power, the optimal value of $q$ is 1.

We note that in the computation shown in Figure \ref{fig:fqpower}, we don't estimate $\sigma$ using \sqe.  This is because we have not developed an estimator for $\sigma$ that can be computed from \sqe (for $q \ne 1, 2$).  If another value of $q$ had indeed been optimal, the next step would have been to find such an estimator and confirm it was accurate enough to produce acceptable $p$-values.  But since the power cannot possibly improve when switching to an estimated $\sigma$ value, this result is sufficient to show that other values of $q$ need not be considered.

\begin{figure}
\centering
\includegraphics[width=\linewidth]{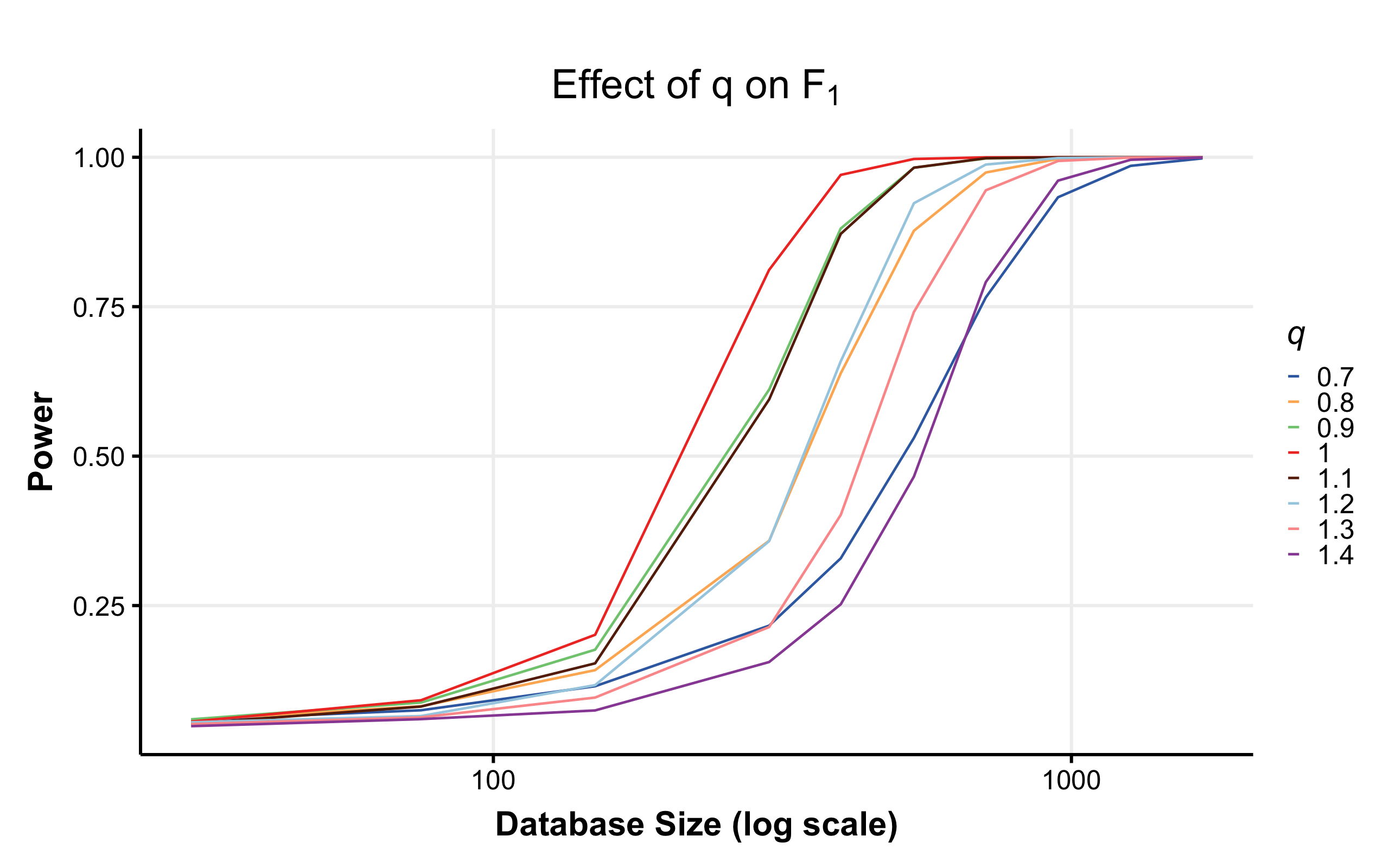}
\caption{Power curves at varying exponents in a simulation setting where $k = 3$, $\sigma = .15$, and effect size: $1\sigma$. Power is experimentally maximized when $q = 1$.}\label{fig:fqpower}
\end{figure}

\subsection{Parameter tuning}\label{subsec:params}

With the generalization of the $F$-statistic, we add $q$ to the list of parameters that determine the power of a testing procedure. The parameters can be organized as follows:

\vspace{3mm}
\textbf{Data Generation:} $N$, $k$, $\sigma$, effect size \\
\indent\textbf{Private Algorithm:} $\epsilon$, $q$, $\rho$
\vspace{3mm}

The analyst gets to select the parameters corresponding to the private algorithm. While
$\epsilon$ is set based on privacy concerns, $q$ and $\rho$ should be set to 
maximize power, which our work suggest occurs at roughly 1 and 0.7, respectively.
This conclusions is based upon an extensive exploration of the parameter space, 
a selection of which can be found in Appendix~\ref{Sec:AppOptPars}.

The salient feature of these plots is that the choice of $q$ is much more 
consequential than the choice of $\rho$. In the setting where $\epsilon = .1$, 
we found that the result seen in Figure~\ref{Fig:f1-vs-f2} -- a greater than 
10-fold reduction in database size to get equivalent power -- holds across a range
of difference data generation parameters.

By contrast, the effect of $\rho$ on power is much smaller; the database size
reduction is closer to 1.1- or 1.2-fold when moving from $\rho = .5$ to $\rho = .7$
when $\epsilon = .1$.

\section*{Acknowledgements}
This material is based upon work supported by the National Science Foundation under Grant No. SaTC-1817245 and by the Gillespie Family Student Research Fund.

\bibliographystyle{plain}
\bibliography{ANOVA}

\appendix

\section{Appendix: An unbiased estimator for $\sigma$}
\label{Sec:AppSig}

We are concerned with finding an unbiased estimator for $\sigma$ given $\widehat{SE}$.

\begin{align}
\widehat{SE} = \left(\sum_{j=1}^k \sum_{i \in C_j} \lvert y_i - \bar{y}_{c_i} \rvert \right) + L_2 \label{eq:se}
\end{align}

Our first goal is to find $E(\widehat{SE})$. We begin by noting several relevant distributions.
\begin{align*}
y_i &\sim N(\mu, \sigma^2) \\
\bar{y}_{c_i} &\sim N(\mu, \sigma^2/n_{c_i}) \\
y_i - \bar{y}_{c_i} &\sim N(0, \tau^2).
\end{align*}

The term $\tau^2$ will be useful in determing the expected value of each term in $\widehat{SE}$, so we seek to express it in terms of known quantities.
\begin{align}
\tau^2 &= Var(y_i - \bar{y}_{c_i}) \nonumber \\
&= Var(y_i) + Var(\bar{y}_{c_i}) - 2 Cov(y_i, \bar{y}_{c_i}) \nonumber \\
&= \sigma^2 + \frac{\sigma^2}{n_{c_i}} - 2 Cov(y_i, \bar{y}_{c_i}). \label{eq:tau}
\end{align}

To solve for the covariance, noting that $Cov(X,Y) = E(XY) - \mu^2$, we start by finding
\begin{align}
E(y_i \bar{y}_{c_i}) &= \frac{1}{n_{c_i}} E(y_i (y_1 + y_2 + \ldots + y_i + \ldots + y_{n_{c_1}})) \nonumber \\
&= \frac{1}{n_{c_i}} E(y_i^2 + y_i S_{-i}) \nonumber \\
&= \frac{1}{n_{c_i}}\left( E\left(y_i^2\right) + E\left(y_i S_{-i}\right)\right), \label{eq:eyy}
\end{align}

where $S_{-i} = y_1 + y_2 + \ldots + y_{i-1} + y_{i+1} + \ldots + y_{n_{c_i}}$. The term $y_i^2$ will be chi-square distributed when standardized as follows:
\begin{align*}
Z^2 = \left(\frac{y_i - \mu}{\sigma}\right)^2 = \frac{1}{\sigma^2}\left(y^2 - 2y_i\mu + \mu^2\right).
\end{align*}

\noindent Therefore we can write
\begin{align*}
E(y^2) &= E\left(\frac{\sigma^2}{\sigma^2} \left(y^2 - 2y_i\mu + \mu^2\right) + 2y_i\mu - \mu^2\right) \\
&= E\left(\sigma^2 Z^2 + 2y_i\mu - \mu^2\right) \\
&= \sigma^2E(Z^2) + 2\mu E(y_i) - \mu^2 \\
&= \sigma^2 + 2\mu^2 - \mu^2 \\
&= \sigma^2 + \mu^2.
\end{align*}

\noindent Continuing from Eq.~\eqref{eq:eyy}, 
\begin{align*}
\frac{1}{n_{c_i}}\left( E\left(y_i^2\right) + E\left(y_i S_{-i}\right)\right) &= \frac{1}{n_{c_i}}\left(\sigma^2 + \mu^2 + E(y_i)E(S_{-i})\right) \\
&= \frac{1}{n_{c_i}}\left(\sigma^2 + \mu^2 + (n_{c_i} - 1)\mu^2\right) \\
&= \frac{1}{n_{c_i}}\left(\sigma^2 + n_{c_i}\mu^2\right) \\
&= \frac{\sigma^2}{n_{c_i}} + \mu^2.
\end{align*}

\noindent Now we can return to the covariance term in Eq.~\eqref{eq:tau},
\begin{align*}
Cov(y_i, \bar{y}_{c_i}) &= E(y_i \bar{y}_{c_i}) - \mu^2 \\
&= \frac{\sigma^2}{n_{c_i}} + \mu^2 - \mu^2 \\
&= \frac{\sigma^2}{n_{c_i}}.
\end{align*}

\noindent Now we can finish the calculation of $\tau^2$.
\begin{align*}
\tau^2 &= \sigma^2 + \frac{\sigma^2}{n_{c_i}} - 2 Cov(y_i, \bar{y}_{c_i}) \\
&= \sigma^2 + \frac{\sigma^2}{n_{c_i}} - 2 \frac{\sigma^2}{n_{c_i}} \\
&= \sigma^2 - \frac{\sigma^2}{n_{c_i}} \\
&= \sigma^2\left(1 - \frac{1}{n_{c_i}}\right).
\end{align*}

The distribution of the absolute value of a normal random variable with mean 0 and variance $\tau^2$ is half normal with a single parameter $\tau$, which should properly be indexed by the observation. For ease of notation, call this random variable $W_i$.
\begin{align*}
W_i = \lvert y_i - y_{c_i}\rvert \sim HN(\tau_i),
\end{align*}

\noindent where $E(W_i) = \tau_i \sqrt{\frac{2}{\pi}}$.  To find $E(\widehat{SE})$, we can write it using the double sum notation as in Eq.~\eqref{eq:se}. Since $L_2$ is a Laplace distribution centered at zero, we can use linearity of expectation to further simplify the expectation:
\begin{align*}
E(\widehat{SE}) &= E\left(\sum_{j=1}^k \sum_{i \in C_i} W_i\right)+ E(L_2)\\
&= E\left(\sum_{j=1}^k \sum_{i \in C_i} W_i \right) + 
0\\
&= \sum_{j=1}^k E\left(n_j W_j \right).
\end{align*}

\noindent The change in indices is justified by realizing that $W_i$ is the same for all $n_{c_i}$ elements in $C_i$ (in expectation). Continuing,
\begin{align*}
\sum_{j=1}^k E\left(n_j W_j \right) &= \sum_{j=1}^k n_j E\left(W_j \right) \\
&= \sum_{j=1}^k n_j \tau_j \sqrt{\frac{2}{\pi}} \\
&= \sqrt{\frac{2}{\pi}} \sum_{j=1}^k n_j \sqrt{\sigma^2\left(1 - \frac{1}{n_j}\right)}  \\
&= \sigma \sqrt{\frac{2}{\pi}} \underbrace{\sum_{j=1}^k n_j \sqrt{\left(1 - \frac{1}{n_j}\right)}}_{\tilde{N}}.
\end{align*}

\noindent Denote the sum $\tilde{N}$, which is a number that approaches $N$ as the group sizes get large. This allows us to express the expected value more concisely as
\begin{align*}
E(\widehat{SE}) &= \sigma \sqrt{\frac{2}{\pi}} \tilde{N}.
\end{align*}

The final step is to correct for this bias in our final estimator:
\begin{align*}
\hat{\sigma} &\coloneqq \frac{\widehat{SE}}{\tilde{N}}\sqrt{\frac{\pi}{2}}, \text{where}\\
E\left(\hat{\sigma}\right) &= \sigma.
\end{align*}

\noindent Using this exact estimator requires knowledge of each of the group sizes, which are private. Instead of dedicating part of the $\epsilon$ budget to this estimation, we used $N - k$ in place of $\tilde{N}$. At the smallest database sizes that we considered (around $N = 100$), this approximation accounts for $< 1\%$ error. As the size of the database grows, this error shrinks to zero.

\section{Appendix: Validity under unequal $n_j$}\label{app:unequal}

Consider a specific setting in which $N = 800$, $\sigma = 0.1$, and $k = 4$. Figure~\ref{Fig:unequal-group-sizes} shows the reference distributions of $F_1$ in four scenarios, each with a different group allocation of the 800 observations. The distribution in $s_0$, the equal group size scenario, generally takes the highest values; indeed it exceeds the other scenarios at every quantile. This represents the distribution that we use to calculate $p$-values and reject $H_0$ whenever an observed statistic is greater than the vertical dotted line (when $\alpha = .05$). This means that in the other three scenarios, featuring unequal allocation, the actual type I error rate (the proportion of the distributions beyond the dotted line) will be less than $\alpha$ and we meet the condition for valid $p$-values.

\begin{figure}
\centering
\includegraphics[width=.85\linewidth]{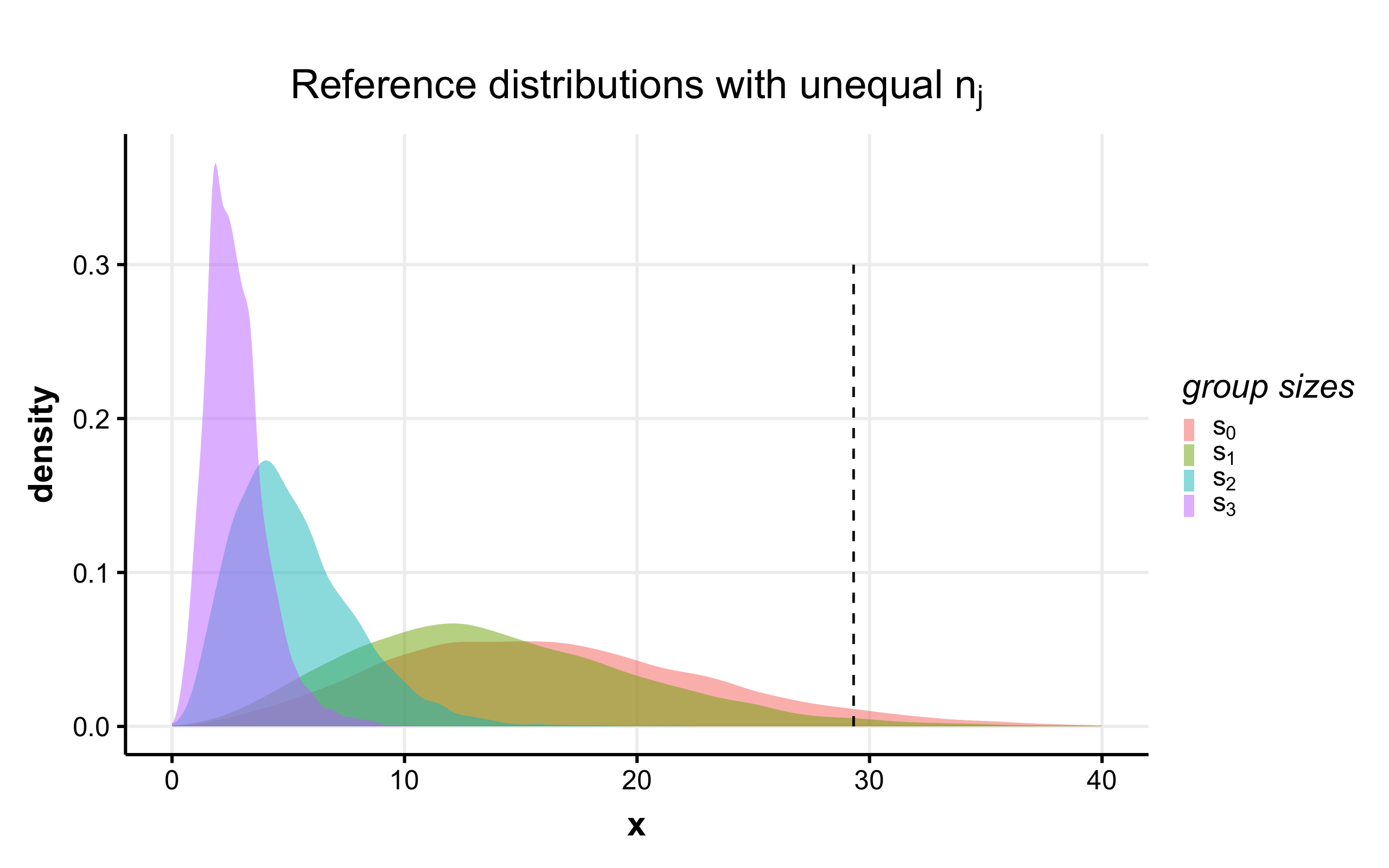}
\caption{Distribution of $\hat{F}_1$ under $H_0$ in four group size allocation scenarios, each with 10,000 simulations. $s_0\colon \{200, 200, 200, 200 \}$, $s_1\colon \{100, 100, 100, 500 \}$, $s_2\colon \{5, 10, 20, 765 \}$, $s_3\colon \{3, 3, 3, 791 \}$. The vertical dotted line indicates the .95 quantile of the equal group size distribution.\label{Fig:unequal-group-sizes}}
\end{figure}

In the following, we show that under any scenario, the expected value of the $F_1$ statistic is maximized when the group sizes are equal. Recall that the form of the statistic is
$$
F_1(\x) = \frac{\sa(\x)/(k-1)}{\se(\x)/(\dbsize-k)}.
$$

where
\begin{equation*}
\sa(\x) = \sum_{j=1}^k n_j |\yj - \yb|
\end{equation*}

and
\begin{equation*}
\se(\x) = \sum_{i=1}^N \lvert y_{i} - \bar{y}_{c_i} \rvert.
\end{equation*}

Only $SA$ is a function of the $n_j$, so we restrict our attention to that term as $E(F_1)$ for different allocations will scale with $E(SA)$ by a multiplicative constant.

Denote each term in the summation $SA_j$. By using the approach used in Appendix~\ref{Sec:AppSig} to find the distribution of $y_i - y_{c_i}$, it can be shown that $\yj - \yb$ is distributed normal mean zero, and variance $\omega_j^2 = \frac{\sigma^2}{n_j} - \frac{\sigma^2}{N}$.
\begin{align*}
    \E(SA) &= \E\left(\sum_{j=1}^k SA_j\right) \\
    &= \sum_{j=1}^k \E(SA_j) \\
    &= \sum_{j=1}^k\E\left(n_j|\yj - \yb|\right) \\
    &= \sum_{j=1}^k\E\left(n_j|\frac{\omega_j(\yj - \yb)}{\omega_j}|\right) \\
    &= \sum_{j=1}^kn_j\omega_j\E\left(|\frac{(\yj - \yb)}{\omega_j}|\right) &&\text{since } \omega_j > 0 \\
    &= \sum_{j=1}^kn_j\omega_j(\sqrt{\frac{2}{\pi}})
\end{align*}

Since each $|\frac{(\yj - \yb)}{\omega_j}|$ is distributed standard half normal, the expectation of each evaluates to the constant $\sqrt{\frac{2}{\pi}}$. Now we may also factor out $\sigma$ from $\omega_j$, providing the final answer:
\begin{equation*}
    \E(SA) = \sigma(\sqrt{\frac{2}{\pi}})\sum_{j=1}^kn_j\sqrt{\frac{1}{n_j} - \frac{1}{N}}.
\end{equation*}

\begin{prop}
The expectation of $SA$ is maximized when group sizes are all equal, i.e., when $n_j = n_{j'}$, for all $j \neq j'$
\end{prop}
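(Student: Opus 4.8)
The plan is to reduce the proposition to a one-variable concavity statement and then invoke Jensen's inequality. Starting from the closed form derived just above,
$$
\E(\sa) = \sigma\sqrt{\tfrac{2}{\pi}}\sum_{j=1}^k n_j\sqrt{\tfrac{1}{n_j} - \tfrac{1}{N}},
$$
I would first simplify the summand by writing $n_j\sqrt{\tfrac{1}{n_j}-\tfrac{1}{N}} = \sqrt{\tfrac{n_j(N-n_j)}{N}}$, so that maximizing $\E(\sa)$ over allocations $(n_1,\dots,n_k)$ subject to $\sum_{j=1}^k n_j = N$ is equivalent to maximizing $\sum_{j=1}^k g(n_j)$, where $g(x) = \sqrt{x(N-x)}$ and the common positive factor $\sigma\sqrt{2/\pi}/\sqrt{N}$ is pulled out front.

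The key step is to establish that $g$ is concave on $(0,N)$. I would argue this by composition: the quadratic $x\mapsto x(N-x) = Nx - x^2$ is concave, and $u \mapsto \sqrt{u}$ is concave and nondecreasing on $[0,\infty)$, and the composition of a nondecreasing concave function with a concave function is concave. Hence $g$ is concave. Alternatively, one can verify concavity directly by computing
$$
g''(x) = -\tfrac14\,(Nx-x^2)^{-3/2}(N-2x)^2 - (Nx-x^2)^{-1/2} < 0.
$$

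With concavity in hand, Jensen's inequality gives, for any feasible allocation,
$$
\frac{1}{k}\sum_{j=1}^k g(n_j) \le g\!\left(\frac{1}{k}\sum_{j=1}^k n_j\right) = g\!\left(\frac{N}{k}\right),
$$
so that $\sum_{j=1}^k g(n_j) \le k\, g(N/k)$, with equality precisely when all the $n_j$ are equal to $N/k$. Reinstating the multiplicative constant recovers the bound on $\E(\sa)$ and shows it is maximized at the equal-size allocation, which is exactly the claim.

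I expect the main (though modest) obstacle to be making the concavity argument airtight — in particular, checking that the monotonicity hypothesis on $\sqrt{\cdot}$ is invoked, since the square root of a concave function is not automatically concave without it. A secondary caveat is that the $n_j$ are integers, so the equality case $n_j = N/k$ is only literally attainable when $k \mid N$; for this reference-distribution argument it suffices to treat the $n_j$ as continuous, and Jensen then identifies the equal-size allocation as the unique maximizer of the relaxation, which is all that the validity argument in this appendix requires.
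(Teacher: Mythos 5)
Your proof is correct, but it takes a genuinely different route from the paper's. The paper argues by pairwise variation: it fixes all group sizes except $n_1$ and $n_2$, sets $c = n_1 + n_2$, writes the two affected summands as $\sqrt{n_1 - n_1^2/N} + \sqrt{(c-n_1) - (c-n_1)^2/N}$, and uses single-variable calculus to show the unique critical point is at $n_1 = n_2 = c/2$, then asserts that applying this across all pairs forces every allocation to be equal. Your approach instead rewrites each summand as $g(n_j)/\sqrt{N}$ with $g(x) = \sqrt{x(N-x)}$ (the same algebraic simplification the paper performs inside its calculus step), establishes concavity of $g$ once and for all via the composition rule (or the explicit second derivative), and concludes by Jensen under the constraint $\sum_j n_j = N$. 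What Jensen buys you is immediate \emph{global} optimality with the equality case identified by strict concavity, whereas the paper's pairwise argument is local in nature and implicitly relies on an exchange step (a non-equal allocation admits a strictly improving pairwise move, so no non-equal allocation can be the maximizer; this needs existence of a maximizer over the feasible set, which the paper does not spell out, and it also never verifies the second-order condition at its critical point). What the paper's approach buys is that it uses only elementary one-variable calculus, with no appeal to the concave-composition lemma or to Jensen's inequality. Your closing caveats are also well placed: both proofs treat the $n_j$ as continuous, and your explicit note that the equal allocation is only attainable when $k \mid N$ --- but that the continuous relaxation suffices for the validity argument --- is a point the paper glosses over entirely.
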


\begin{proof}
It suffices to show that the sum between any two summands of the expectation is maximized when the other $k-2$ group sizes are fixed, so we begin by fixing each $n_j$ except for $n_1$ and $n_2$, without loss of generality. Define $c = n_1 + n_2$ to be the sum between the two unfixed group sizes, so then the sum between the first two summands becomes:
\begin{align*}
&\sigma(\sqrt{\frac{2}{\pi}})(n_1\sqrt{\frac{1}{n_1} - \frac{1}{N}} + (c-n_1)\sqrt{\frac{1}{(c-n_1)} - \frac{1}{N}}) \\
&=\sigma(\sqrt{\frac{2}{\pi}})(\sqrt{n_1 - \frac{n_1^2}{N}} + \sqrt{(c-n_1) - \frac{(c-n_1)^2}{N}})
\end{align*}
since $n_1$ and $c-n_1 > 0$. Now factoring constants and differentiating with respect to $n_1$:
\begin{align*}
\frac{1}{\sigma(\sqrt{\frac{2}{\pi}})}\frac{d}{dn_1}\E(SA_1 \!+\! SA_2) =~~~~~~&\\
 \frac{\frac{2 (c-n_1)}{N}-1}{2 \sqrt{-\frac{(c-n_1)^2}{N}+c-n_1}}&+\frac{1-\frac{2 n_1}{N}}{2 \sqrt{n_1-\frac{n_1^2}{N}}},
\end{align*}
which has only one solution in $n_1$, when $n_1 = \frac{c}{2} =  \frac{n_1 + n_2}{2}$, which implies $n_1 = n_2$. This critical point is associated with a maximum expectation on the first two summands, as desired. Applying this result across all $n_j, n_{j'}$ pairs such that $j \neq j'$, will result in maximizing each summand of the expectation when all $n_j = n_{j'}$, hence maximizing $\E(SA)$.
\end{proof}

\section{Appendix: Sensitivity Proofs for \sqa and \sqe}\label{sec:fqsensitivity}

Recall that, from Definition \ref{def:Fq}, we have the following functions \sqa and \sqe:
\begin{equation*}
\sqa(\x) = \sum_{j=1}^k n_j \left\vert \overline{y}_j - \overline{y} \right\vert^q
\end{equation*}

\begin{equation*}
\sqe(\x) = \sum_{i=1}^N \left\vert y_i - \overline{y}_{c_i} \right\vert^q
\end{equation*}

\noindent Our goal here is to bound the sensitivities for \sqa and \sqe.

\setcounter{theorem}{6}
\begin{theorem}[\sqe Sensitivity]  \label{thm:SQEsens-appendix} 
The sensitivity of \sqe is bounded above by
\begin{equation*}
2\bigg(\frac{\dbsize}{2}\bigg)^{(1-q)} + 1
\end{equation*}
when $q \in (0,1)$ and
\begin{equation*}
\dbsize - \dbsize\bigg(1-\frac{2}{\dbsize}\bigg)^q +1 
\end{equation*}
when $q\geq 1$. Note that both give an upper bound of 3 when $q=1$.
\end{theorem}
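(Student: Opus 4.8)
The plan is to mirror the structure of the $q=1$ argument (proof of Theorem~\ref{thm:SEsens}) and to isolate the entire dependence on $q$ in a single per-term inequality, treating $q\in(0,1)$ and $q\ge 1$ separately precisely because $t\mapsto t^q$ is concave in the first regime and convex in the second. As there, I take neighbors \x and \xprime differing in row $r$, with $c_r=a$ in \x and $c_r=b$ in \xprime, and let $n_a,n_b$ denote the sizes of groups $a,b$ excluding $r$. Writing $\sqe(\x)=\sum_{j}\sum_{i\in C_j}|y_i-\bar y_{c_i}|^q$, every summand in a group other than $a$ or $b$ is unchanged, so the sensitivity is at most the total change contributed by (i) the $n_a$ non-$r$ points of group $a$, (ii) the $n_b$ non-$r$ points of group $b$, and (iii) the summand for row $r$ itself. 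Since $y_r,\bar y_a,\bar y_b\in[0,1]$, the row-$r$ summand lies in $[0,1]$ in both databases, which accounts for the additive $+1$ appearing in both bounds.

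For a non-$r$ point in group $a$ the only source of change is the shift of the group mean $\bar y_a$, which moves by at most $\delta\le 1/(n_a+1)$, exactly as computed in the proof of Theorem~\ref{thm:SEsens}; consequently $|y_i-\bar y_a|$ moves by at most $\delta$. The heart of the generalization is controlling how much $t\mapsto t^q$ magnifies this movement. For $q\in(0,1)$ I would invoke subadditivity of the concave map $t\mapsto t^q$ (with $0^q=0$), namely $|u^q-v^q|\le|u-v|^q\le\delta^q$ for $u,v\ge 0$. For $q\ge 1$ I would instead use that $t\mapsto t^q$ is increasing and convex on $[0,1]$, so among pairs $u,v\in[0,1]$ with $|u-v|\le\delta$ the gap $|u^q-v^q|$ is maximized at the top endpoint, giving $|u^q-v^q|\le 1-(1-\delta)^q$.

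Summing the per-term bound over the $n_a$ points of group $a$ (and likewise $n_b$ over group $b$) yields a single-group contribution of $\psi(n):=n\,(n+1)^{-q}$ when $q\in(0,1)$ and $\phi(n):=n\,\bigl[1-(n/(n+1))^q\bigr]$ when $q\ge 1$. It then remains to maximize $\psi(n_a)+\psi(n_b)$, respectively $\phi(n_a)+\phi(n_b)$, subject to $n_a+n_b\le N-1$. For $q\in(0,1)$ I would bound $\psi(n)<n^{1-q}$ and use concavity of $n\mapsto n^{1-q}$ to push to the balanced split, obtaining at most $2(N/2)^{1-q}$. For $q\ge 1$ I would argue that $\phi$ is increasing and concave, so the sum is again maximized at the balanced, large-group configuration $n_a=n_b=(N-1)/2$, and then verify that $2\phi((N-1)/2)\le N-N(1-2/N)^q$. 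Adding the $+1$ from row $r$ produces the two stated closed forms, both of which collapse to $3$ at $q=1$ as required.

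The per-term inequalities are routine; the genuine obstacle is the aggregation step for $q\ge 1$. I expect the work to lie in verifying that $\phi$ is concave (so that a balanced split, rather than a single dominant group, is the worst case) and in the final monotonicity comparison that converts $2\phi((N-1)/2)$ into the clean expression $N-N(1-2/N)^q$; here the facts that $x\mapsto 1-(1-x)^q$ is increasing and that $(N-1)/2\le N/2$ should close the gap with room to spare. One should also dispatch the degenerate sub-case $a=b$, where only the numerical value of $r$ changes and a single group is affected; the same per-term bounds then give a strictly smaller total, so this case is dominated by the $a\ne b$ analysis above.
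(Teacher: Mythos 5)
Your proposal is correct and follows essentially the same route as the paper's own proof: a per-term bound on $\Delta t_i$ obtained from the concavity (worst case near $0$, giving $\delta^q$) or convexity (worst case near $1$, giving $1-(1-\delta)^q$) of $t\mapsto t^q$, followed by showing that the balanced allocation $n_a=n_b\approx N/2$ maximizes the aggregated bound, plus $1$ for row $r$ itself. Your refinements --- the explicit subadditivity inequality $|u^q-v^q|\le|u-v|^q$ for $q<1$, concavity of $\phi$ in place of the paper's derivative computation, and the $a=b$ degenerate case --- are minor tightenings of the same argument, and the steps you flag as remaining work (concavity and monotonicity of $\phi$, and the comparison $2\phi((N-1)/2)\le N-N(1-2/N)^q$) do all go through.
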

\begin{proof}
As in the previous proofs of the sensitivity of \se and \sa, suppose neighboring databases \x and \xprime differ by some row $r$, with $c_r = a$ in \x and $c_r = b$ in \xprime. Rewrite the \sqe as a sum that indexes over group size and entries within each group.
$$\sqe(\x) = \sum_{j=1}^k \sum_{i \in C_j} \left\vert  y_i - \overline{y}_{c_i} \right\vert^q.$$
Let $t_i =  \left\vert y_{i} - \overline{y}_{c_i} \right\vert ^q$ for any entry $i$.  Note that if $c_i \neq a,b$, then $t_i$ will not change between databases \x and \xprime, as the group means of the other groups are not altered. Thus, unless $i$ is in group $a$ or $b$, $t_i$ will contribute nothing to the overall sensitivity of the \sqe. For notational ease, let $z = y_i-\overline{y}_{c_i}$ and parameterize $t_i$ as a function of $z$. The maximum $z$ can change by is $1/n_{c_i}$. We now bound the sensitivity by individually bounding the sensitivity $\Delta t_i$ of each term with  $c_i = a,b$.   

\smallskip\noindent\textbf{Case 1:}
When $q$ is less than $1$, $t_i(z)$ is a concave function with minimum at $z=0$ that is symmetric about the y-axis and which monotonically increases for positive $z$. Because the slope of $t_i(z)$ is highest near $z=0$, the worst case sensitivity is between $z = 0$ and $z = 1/n_{c_i}$, and hence
\begin{align*}
\Delta t_{i} &\le \left\vert t_i(0) - t_i(1/n_{c_i}) \right\vert \\
	&= (1/n_{c_i})^q.
\end{align*}

Note that when $c_i = a$ and $i \neq r$, $\Delta t_i \leq (1/n_a)^q$.  The analogous statement holds for group $b$.  By multiplying these bounds by the number of terms in each group, we get
\begin{equation*}
\Delta \sqe \le  n_a^{1-q} + n_b^{1-q} + 1,
\end{equation*}
where the first and second terms are the total change possible to terms in groups $a$ and $b$ respectively, and the final term is for the contribution of row $r$ itself, which we cannot bound other than by noting that its value both in \x and \xprime falls inside of $[0,1]$.

We must now take the worst-case value of this bound over all possible database sizes $n_a$ and $n_b$.  Since $n_a$ is a positive integer and $q<1$, $n_a^{1-q}$ increases for a given $q$ as $n_a$ increases (and the same is true for $n_b$). So, the worst-case sensitivity will occur when as much of the total database is in groups $a$ and $b$ as possible. Write $n_b = N-n_a$. Then
$$n_a^{1-q} + (N-n_a)^{1-q}$$
is a downward facing parabola-like function with maximum value when $n_a = N/2$. So, the sensitivity of the \sqe is bounded above by

$$\Delta \sqe \le 2\left(\frac{N}{2}\right)^{1-q} + 1.$$

\noindent\textbf{Case 2:}
When $q$ is greater than $1$, $t_{i}$ is a convex function with maximum at $z=1$, symmetric about the $y$-axis, and monotonically increasing for positive $z$. So, the worst case sensitivity is between $z=1$ and $z=1-1/n_{c_i}$, and hence 
\begin{align*}
\Delta t_{i} &\le \left\vert t_{i}(1) - t_{i}(1-1/n_{c_i}) \right\vert \\
	&= 1 - (1-1/n_{c_i})^q.
\end{align*}
Summing these bounds over all terms, we have 
$$ \Delta\sqe \le n_a(1-(1-1/n_a)^q) + n_b(1-(1-1/n_b)^q) + 1.$$

Again, the sensitivity will be maximized when as much of the database is distributed between $n_a$ and $n_b$ as possible. To determine what the worst case allocation is, let 
$$f = n_a(1-(1-1/n_a)^q) + (N-n_a)(1-(1-1/(N-n_a))^q) + 1,$$
i.e., $f$ is an expression for the upper bound of $\Delta\sqe$ with $n_b$ replaced by $N-n_a$ to maximize sensitivity. Then, we can maximize $f$ in terms of $n_a$:
$$ \frac{\partial \Delta f}{\partial n_a} =  -\frac{1-q}{(N - n_a)^q} + \frac{1-q}{n_a^q}$$
has a critical point at $n_a = N/2$, and 
$$ \frac{\partial \Delta^2 f}{\partial n_a^2} = - \frac{(1-q)q}{(N-n_a)^{-1-q}} - \frac{(1-q)q}{n_a^{-1-q}}$$
is always negative. So, $f$ is concave down and $n_a = N/2$ is a global maximum. Hence, the worst case sensitivity occurs when the database is distributed equally between groups $a$ and $b$, i.e.,

$$ \Delta\sqe \le N \left( 1- \left(1-\frac{2}{N}\right)^q \right) + 1.$$
\end{proof}

\setcounter{theorem}{7}
\begin{theorem}[\sqa Sensitivity] \label{thm:SQAsens-appendix} The sensitivity of \sqa is bounded above by 
\begin{equation*}
\dbsize\bigg(\frac{3}{\dbsize}\bigg)^q + 1
\end{equation*}
when $q \in (0,1)$ and
\begin{equation*}
\dbsize-\dbsize\bigg(1-\frac{3}{\dbsize}\bigg)^q + 1
\end{equation*}
when $q \geq 1$. Note that both give an upper bound of 4 when $q = 1$.
\end{theorem}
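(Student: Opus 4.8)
The plan is to mirror the structure of the \sqe sensitivity proof (Theorem~\ref{thm:SQEsens-appendix}), with one essential new complication: unlike \sqe, the \sqa involves the grand mean $\bar y$, so perturbing a single row moves not only the two group means $\bar y_a$ and $\bar y_b$ but also $\bar y$, and this last shift propagates into \emph{every} summand. As before I would take neighboring databases \x and \xprime differing in row $r$, with $c_r = a$ in \x and $c_r = b$ in \xprime, rewrite $\sqa(\x) = \sum_{j=1}^k \sum_{i \in C_j} \lvert \bar y_{c_i} - \bar y \rvert^q$ so that every summand carries coefficient one, and bound the total change term by term.

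First I would record the per-term bounds, which are identical in form to those in the \sqe proof since the summand is again $\lvert z \rvert^q$ for $z = \bar y_{c_i} - \bar y \in [-1,1]$. Writing $s$ for the amount by which $z$ can move in a given term, concavity of $\lvert z\rvert^q$ near $0$ gives a per-term change of at most $s^q$ when $q \in (0,1)$, while convexity near $\lvert z \rvert = 1$ gives a per-term change of at most $1 - (1-s)^q$ when $q \ge 1$. The genuinely new content is bounding $s$ for each group: the grand mean contributes a shift of at most $1/\dbsize$ to every term, and additionally the group means of $a$ and $b$ shift by at most $1/n_a$ and $1/n_b$ respectively. Hence a term in group $a$ has $s \le 1/n_a + 1/\dbsize$, a term in group $b$ has $s \le 1/n_b + 1/\dbsize$, and every other term has $s \le 1/\dbsize$.

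The key step is then to control the aggregate. Summing the shift over all non-$r$ terms gives $\sum_j n_j s_j \le n_a(1/n_a + 1/\dbsize) + n_b(1/n_b + 1/\dbsize) + \sum_{j \ne a,b} n_j/\dbsize = 2 + (\dbsize-1)/\dbsize \le 3$, where the ``extra'' unit beyond the value $2$ seen in the \sqe proof is precisely the grand mean's global contribution. Because both per-term bound functions $s \mapsto s^q$ (for $q<1$) and $s \mapsto 1-(1-s)^q$ (for $q \ge 1$) are concave and increasing on $[0,1]$, I would apply Jensen's inequality to the terms as equally weighted points: the total change over the non-$r$ terms is at most (number of terms) times the bound function evaluated at the average shift, which is at most $3/\dbsize$. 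This yields $\dbsize(3/\dbsize)^q$ in the concave case and $\dbsize(1 - (1-3/\dbsize)^q) = \dbsize - \dbsize(1-3/\dbsize)^q$ in the convex case. Finally the single row $r$ itself changes its summand by at most $1$, since its value lies in $[0,1]$ in both databases, contributing the $+1$; the two stated bounds follow. As a sanity check, at $q=2$ this recovers the value $7 - 9/\dbsize$ used by Campbell et al.~for \ssa.

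I expect the main obstacle to be the aggregation step rather than the per-term estimates. In the \sqe proof the corresponding optimization was carried out by explicitly reducing to two groups and locating the critical point $n_a = \dbsize/2$; here the grand mean couples all groups together, so such a reduction no longer isolates the relevant variables. The cleanest route is to avoid a direct multivariate optimization and instead exploit concavity of the per-term bound via Jensen's inequality together with the single scalar constraint $\sum_j n_j s_j \le 3$. Care is needed only with the minor discrepancy between using $\dbsize$ and $\dbsize-1$ terms, which is absorbed because $\dbsize(3/\dbsize)^q$ and $\dbsize - \dbsize(1-3/\dbsize)^q$ are increasing in $\dbsize$; I would note this but not belabor it.
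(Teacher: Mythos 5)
Your proposal is correct, and it reaches the paper's bounds by a genuinely different aggregation argument. The paper keeps the per-group structure explicit: it bounds the change in group-$j$ terms by $n_j(1/N)^q$ (resp.\ $n_j(1-(1-1/N)^q)$) for $j \ne a,b$ and by $n_a(1/N+1/n_a)^q$, $n_b(1/N+1/n_b)^q$ (resp.\ the convex analogues) for the affected groups, argues by monotonicity that the worst case allocates the entire database to groups $a$ and $b$, and then runs a single-variable calculus optimization (critical point at $n_a = N/2$, second-derivative sign check) to show the equal split is extremal, giving $N(3/N)^q + 1$ and $N - N(1-3/N)^q + 1$. You instead collapse the configuration dependence into the single scalar constraint $\sum_j n_j s_j \le 2 + (N-1)/N \le 3$ (two group means plus the grand mean's global contribution) and apply Jensen's inequality to the concave, increasing per-term bound ($s \mapsto s^q$ for $q<1$, $s \mapsto 1-(1-s)^q$ for $q \ge 1$), together with monotonicity of $m \mapsto m\,\phi(3/m)$ to pass from $N-1$ terms to $N$. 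This avoids the calculus and the allocation case analysis entirely, handles all group configurations uniformly, and makes the origin of the constant $3$ transparent; the paper's route, by contrast, exhibits the extremal configuration ($n_a = n_b = N/2$) explicitly, which is informative but more laborious. One detail to patch if you write this up: for $q \ge 1$ and a singleton group ($n_a = 1$) the shift $1/n_a + 1/N$ exceeds $1$, so the per-term bound should be capped, e.g.\ replaced by $\min\{1,\,1-(1-\min\{s,1\})^q\}$, which remains concave and increasing so Jensen applies unchanged (the paper's own proof silently has the same edge case). Your sanity check that $q=2$ recovers the $7-9/N$ sensitivity of Campbell et al.\ is also correct and is a nice consistency confirmation.
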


\begin{proof}
Let $s_{j} =  \left\vert \bar{y}_{j} - \grand \right\vert ^q$ for any group $j$. I.e., $s_{j}$ is the (unweighted) term in the calculation of the \sqa that corresponds to group $j$. Note that as the grand mean changes between databases $\x$ and $\xprime$ in addition to the group means, all terms, not just those for groups $a$ and $b$, will contribute to the sensitivity of the \sqa. Recall that the sensitivity of the grand mean is $1/N$, while the sensitivity of the group mean for groups $a$ and $b$ are $1/n_a$ and $1/n_b$ respectively. \\

\noindent\textbf{Case 1:} When $q<1$, $s_j(z)$ is a concave function with minimum at $z=0$, symmetric about the $y$-axis, and monotonically increasing for positive $x$. So, the worst case sensitivity of $\Delta s_j$ for $j \ne a,b$ is between $x=0$ and $x=1/N$, and the worst case sensitivity of $\Delta s_j$ for $j = a,b$ is between $z=0$ and $z=1/n_j + 1/N$. Then, the total sensitivity of the \sqa for $q<1$ is
\begin{align*}
 \Delta\sqa \le  (N-n_a-&n_b-1)(1/N)^q + n_a(1/N + 1/n_a)^q \\
 &+ n_b(1/N + 1/n_b)^q + 1 .
\end{align*}

The addition of the $1$ comes from the fact that our data point $r$ that switches between groups contributes $\vert \bar{y}_a - \grand \vert^q$ to the calculation of the \sqa in database \x, and contributes $\vert \bar{y}_b - \grand \vert^q$ to the calculation of the \sqa in database \xprime; the difference between these two terms is bounded above by 1. Note that since $q<1$, $(1/z)^q > 1/z$. Hence, $(1/N + 1/n_a)^q > (1/N)^q$ and thus the worst-case sensitivity occurs when all of $N$ is allocated to groups $a$ and $b$. I.e.,
$$ \Delta\sqa \le n_a(1/N + 1/n_a)^q + (N-n_a)(1/N + 1/(N-n_a))^q. $$
Then, as in the proof of the \sqe's sensitivity, to determine the worst-case sensitivity in terms of $N$, let 
$$g = n_a(1/N + 1/n_a)^q + (N-n_a)(1/N + 1/(N-n_a))^q $$
and maximize this expression in terms of $n_a$.
\begin{align*}
\frac{\partial g}{\partial n_a} = -&\left(\frac{1}{N} + \frac{1}{N-n_a}\right)^q + \left(\frac{1}{N} + \frac{1}{n_a}\right)^q \\
&+ \frac{q\left(\frac{1}{N} + \frac{1}{N-n_a}\right)^{q-1}}{N-n_a} - \frac{q\left(\frac{1}{N} + \frac{1}{n_a}\right)^{q-1}}{n_a}
\end{align*}
This is a symmetric expression between $N$ and $N-n_a$. So, there must be a critical point at $n_a = N/2$. Note also that
\begin{align*}
\frac{\partial^2 g}{\partial n_a^2} &= N^2(q-1)q \left( \frac{(\frac{1}{N} + \frac{1}{N-n_a})^q}{(N-n_a)(-2N+n_a)^2} + \frac{\left(\frac{1}{N} + \frac{1}{n_a}\right)^q}{n_a(N+n_a)^2}\right) \\
	&\le 0,
\end{align*}
since $N>n_a$ and $q<1$. So, $n_a = N/2$ is a global maximum, and hence
$$\Delta\sqa \le N \left( \frac{3}{N} \right)^q.$$

\noindent\textbf{Case 2:} When $q$ is greater than $1$, $s_j$ is a convex function with minimum at $z=0$, symmetric about $z=0$, and monotonically increasing for positive $z$. So, the worst case sensitivity of $\Delta s_j$ for $j \ne a,b$ is between $z= 1$ and $z=1-1/N$, and the worst-case sensitivity of $\Delta s_j$ for $i=a,b$ is between $z=1$ and $z = 1 - 1/N - 1/n_i$. Then, the total sensitivity of the \spa for $q<1$ is
\begin{align*}
\Delta\sqa &\le  (N-n_a-n_b)(1-(1-1/N)^q) \\
    & \hspace{1cm} + n_a(1-(1-1/N-1/n_a)^q) \\
	& \hspace{1cm} + n_b(1-(1-1/N-1/n_b)^q) + 1.
\end{align*}

Note that 
\begin{align*}
1-1/N-1/n_b &< 1-1/N \\
\Rightarrow (1-1/N-1/n_b)^q &< (1-1/N)^q \\
\Rightarrow 1 - (1-1/N-1/n_b)^q &> 1- (1-1/N)^q.
\end{align*}
Thus, $\Delta\sqa$ is maximized when $N$ is maximally allocated to groups $a$ and $b$. As in the proof for $q<1$, this occurs when $n_a = N/2 = n_b$. Then, 
$$\Delta\sqa < N \left( 1 - \left( 1 - \frac{3}{N}\right)^q \right).$$
\end{proof}

\section{Appendix: Direct Calculation of $\sigma$}
\label{Sec:AppDirSig}

In our work, we used $\widehat{SE}$ to form an estimator for $\sigma$ when calculating the null distribution.  We developed another version of a differentially-private ANOVA that calculates $\sigma$ directly using a portion of the privacy budget.  We first define a generalized variance and prove the sensitivity bounds on this quantity.

\begin{definition}[$\var_q$] \label{def:varq} Given a database \x with $k$ groups and $n_j$ entries in the $j$-th group, the $\var_q$ calculation is defined as
\begin{equation*}
\var_q = \sum_{i=1}^N \lvert y_{i} - \overline{y} \rvert^q
\end{equation*}
where $q$ is a positive real number.
\end{definition}

\begin{theorem}[\varq-Sensitivity] \label{thm:varqSens} 
The sensitivity of \varq is bounded above by
$$ \frac{N-1}{N^q} + 1 $$
when $q<1$, and is bounded above by
$$ (N-1)(1-(1-1/N)^q) + 1 $$
when $q>1$. Note that these both give a bound of $2-1/N$ when $q=1$.
\end{theorem}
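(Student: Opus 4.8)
The plan is to mirror the structure of the \sqe sensitivity proof (Theorem~\ref{thm:SQEsens}), splitting into the cases $q \in (0,1)$ and $q > 1$, but the argument here is actually simpler. The key observation is that \varq depends only on the numerical values $y_i$ and the grand mean \grand, not on the group structure at all. Consequently, when passing between neighboring databases \x and \xprime that differ in a single row $r$, the change in the category label of $r$ is irrelevant, and—unlike the \sqa and \sqe proofs—there is no optimization over how the $N$ entries are allocated among groups. The worst case is instead determined purely by counting terms against the displacement of the grand mean.

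First I would write $\varq(\x) = \sum_{i=1}^N t_i$ with $t_i = \lvert y_i - \grand \rvert^q$, and note that changing the numerical value $y_r$ by at most $1$ (since all values lie in $[0,1]$) shifts \grand by at most $1/\dbsize$. For every index $i \neq r$, the only source of change in $t_i$ between \x and \xprime is this shift in \grand, so I can bound $\Delta t_i$ by the maximum change of the map $z \mapsto \lvert z \rvert^q$ under an argument perturbation of size $1/\dbsize$, exactly as in the \sqe argument. I would then handle the changed row separately: since $y_r$ and \grand both lie in $[0,1]$ in each database, $t_r \in [0,1]$ in each, so $t_r$ contributes at most $1$ to the sensitivity.

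The two cases follow by the same convexity/concavity reasoning as before. For $q \in (0,1)$, $\lvert z \rvert^q$ is concave with steepest slope near $z = 0$, so the worst-case perturbation for a non-$r$ term runs from $z = 0$ to $z = 1/\dbsize$, giving $\Delta t_i \le (1/\dbsize)^q$; summing over the $\dbsize - 1$ such terms and adding the contribution of row $r$ yields $\frac{\dbsize-1}{\dbsize^q} + 1$. For $q > 1$, $\lvert z \rvert^q$ is convex and steepest near $z = 1$ on $[0,1]$, so the worst case runs from $z = 1 - 1/\dbsize$ to $z = 1$, giving $\Delta t_i \le 1 - (1 - 1/\dbsize)^q$; summing gives $(\dbsize-1)\bigl(1 - (1-1/\dbsize)^q\bigr) + 1$. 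I would then check that both expressions collapse to $(\dbsize-1)/\dbsize + 1 = 2 - 1/\dbsize$ at $q = 1$, confirming consistency.

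The one point requiring care is the claim that the total change may be bounded by summing the per-term worst cases. A single shift of \grand cannot simultaneously place every point in its individually worst position, so this bound carries some slack; but since we only need an \emph{upper} bound on the sensitivity, the slack is harmless, and this is exactly the same device already used in the \sqe and \sqa proofs. This is the main (mild) obstacle to stating the result cleanly, and it is resolved simply by emphasizing that the per-term maxima are taken independently.
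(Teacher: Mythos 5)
Your proposal is correct and follows essentially the same route as the paper's proof: both decompose $\varq$ into terms $t_i = \lvert y_i - \grand\rvert^q$, use the $1/N$ sensitivity of the grand mean together with concavity (worst case near $z=0$, yielding $(1/N)^q$ per term) for $q<1$ and convexity (worst case near $z=1$, yielding $1-(1-1/N)^q$ per term) for $q>1$, and then sum the $N-1$ per-term bounds plus a contribution of $1$ for the changed row. Your added remarks—that the group structure is irrelevant here and that summing independent per-term maxima only loosens the upper bound—are correct clarifications, not deviations.
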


\begin{proof}
Consider databases \x and \xprime which differ in entry $r$. Recall that the sensitivity of the grand mean is $1/N$. Let $t_i = \left\vert  y_i - \grand \right\vert ^q$. When $q<1, t_i$ is a concave function with positive range, so the worst case sensitivity is between $ y_i - \grand = 0$ and $ y_i - \grand = 1/N$. That is, 
\begin{align*}
\Delta t_i &= \left\vert t_{i}(0) - t_i\left(\frac{1}{N}\right) \right\vert \\
	&= \left( \frac{1}{N} \right)^q.
\end{align*}

Every single term can be affected by at most $\left( \frac{1}{N} \right)^q$, except the term $r$, which can change by $1$. So, 
\begin{align*}
\Delta \varq &\le \left\vert (N-1)(\frac{1}{N})^q + 1 \right\vert \\
	&= \frac{N-1}{N^q} + 1.
\end{align*}
When $q>1, t_i$ is convex. The worst case sensitivity is between  $y_i - \grand = 1$ and $y_i - \grand = 1-1/N$ Then,
\begin{align*}
\Delta \varq \le \left\vert (N-1)(1-(1-\frac{1}{N})^q) + 1 \right\vert.
\end{align*}
\end{proof}

We again use the Laplace mechanism, and the sensitivity of a database's variance follows directly from Thm.~\ref{thm:varqSens}:

\begin{corollary}
\label{thm:varsens}
The sensitivity of the variance of a database is bounded above by
$$ 3 + 1/N^2 - 3/N.$$
\end{corollary}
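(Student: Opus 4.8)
The plan is to observe that the corollary is simply the $q = 2$ specialization of Theorem \ref{thm:varqSens}, so that essentially no new work is required beyond an algebraic simplification. The variance of a database (the sum of squared deviations $\sum_{i=1}^N (y_i - \grand)^2$) is exactly $\varq$ evaluated at $q = 2$. Since $2 > 1$, the relevant bound from Theorem \ref{thm:varqSens} is the one stated for $q \ge 1$, namely $(N-1)\left(1 - \left(1 - 1/N\right)^q\right) + 1$. The entire proof therefore consists of substituting $q = 2$ into this expression and reducing it to the claimed closed form.

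Carrying out the substitution, I would first expand $\left(1 - 1/N\right)^2 = 1 - 2/N + 1/N^2$, so that $1 - \left(1 - 1/N\right)^2 = 2/N - 1/N^2 = (2N - 1)/N^2$. Multiplying by $(N-1)$ gives $(N-1)(2N-1)/N^2 = (2N^2 - 3N + 1)/N^2 = 2 - 3/N + 1/N^2$, and adding the trailing $+1$ from the sensitivity bound yields $3 - 3/N + 1/N^2$, which is precisely the stated bound $3 + 1/N^2 - 3/N$ (mirroring how the same $q \ge 1$ branch collapses to $2 - 1/N$ at $q = 1$, as noted in Theorem \ref{thm:varqSens}).

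There is no genuine obstacle here: the result is an immediate corollary and the only labor is the routine binomial expansion above. The one point worth stating explicitly is the convention being used --- in this paper the ``variance'' is taken to be the unnormalized sum of squared deviations, i.e.\ $\varq$ at $q = 2$, rather than that sum divided by $N$ (or $N-1$). Under that convention no additional normalizing factor appears and the sensitivity is inherited verbatim from Theorem \ref{thm:varqSens}; had one instead used the $1/N$-normalized variance, the bound would simply be scaled by $1/N$.
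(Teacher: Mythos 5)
Your proposal is correct and matches the paper's approach exactly: the paper likewise presents this as an immediate consequence of Theorem~\ref{thm:varqSens}, obtained by setting $q=2$ in the $(N-1)\left(1-(1-1/N)^q\right)+1$ branch, and your algebraic reduction to $3 + 1/N^2 - 3/N$ is the omitted routine step. Your explicit note that ``variance'' here means the unnormalized sum of squared deviations ($\var_q$ at $q=2$) is also the convention the paper implicitly uses.
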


Algorithm~\ref{alg:F1hatVar} divides the privacy budget into $\rho_1$ for the \sa, $\rho_2$ for the \se, and $\rho_3$ for the $\var$ calculations respectively.  The values of $\rho$ are provided as input along with the database \x and the $\eps$ value. When $\widehat{SE}$ is used as the $\sigma$ estimate, we found a 70-30 split of the privacy budget between \sa and \se was optimal (Figure~\ref{Fig:f1-epsfrac}).   In the power analysis of Algorithm~\ref{alg:F1hatVar}, we vary the proportion of the privacy budget dedicated to the \var calculation.  We fixed the proportion of the privacy budget for $\rho_1$ and $\rho_2$ to be a 70-30 split budget not used by $\rho_3$ (the \var calcuation).  

\begin{algorithm}
    \begin{algorithmic}
        \STATE \textbf{Input:} Database \x, $\eps$ value, $\rho_1, \rho_2, \rho_3$
        \STATE Compute $\widehat{\sa} = \sa + Z_1$ where $Z_1\sim \lap\left(\frac{4}{\eps \rho_1}\right)$
        \STATE Compute $\widehat{\se} = \se+ Z_2$ where $Z_2\sim \lap\left(\frac{3}{\eps \rho_2}\right)$
        \STATE Compute $\!\!\widehat{\var} \!\!= \!\!\var \!\!+\! Z_3\!$ where  $\!Z_3 \!\sim\! \lap\!\!\left(\! \frac{3\!+\!1/N^2\! - \!3/N}{\eps\rho_3} \!\right)$
        \STATE Compute $\widehat{F_1} = \frac{\widehat{\sa}/(\k-1)}{\widehat{\se}/(\dbsize-\k)}$
        \STATE \textbf{Output:} $\widehat{F_1}, \widehat{\sa}, \widehat{\se}, \widehat{\var}$
    \end{algorithmic}
    \caption{Differentially private $F_1$-statistic with direct calculation of variance}
     \label{alg:F1hatVar}
\end{algorithm}

Unsurprisingly, allocating part of the budget to calculating the standard deviation negatively impacts the statistical power of the test (Fig.~\ref{Fig:multiple-rhos}). However, it is surprising that this allocation does not impact the power of the test more strongly, since \var has a larger sensitivity and requires adding noise to smaller values than the \sa and \se. 

\begin{figure}[h]
\centering
\includegraphics[width=\linewidth]{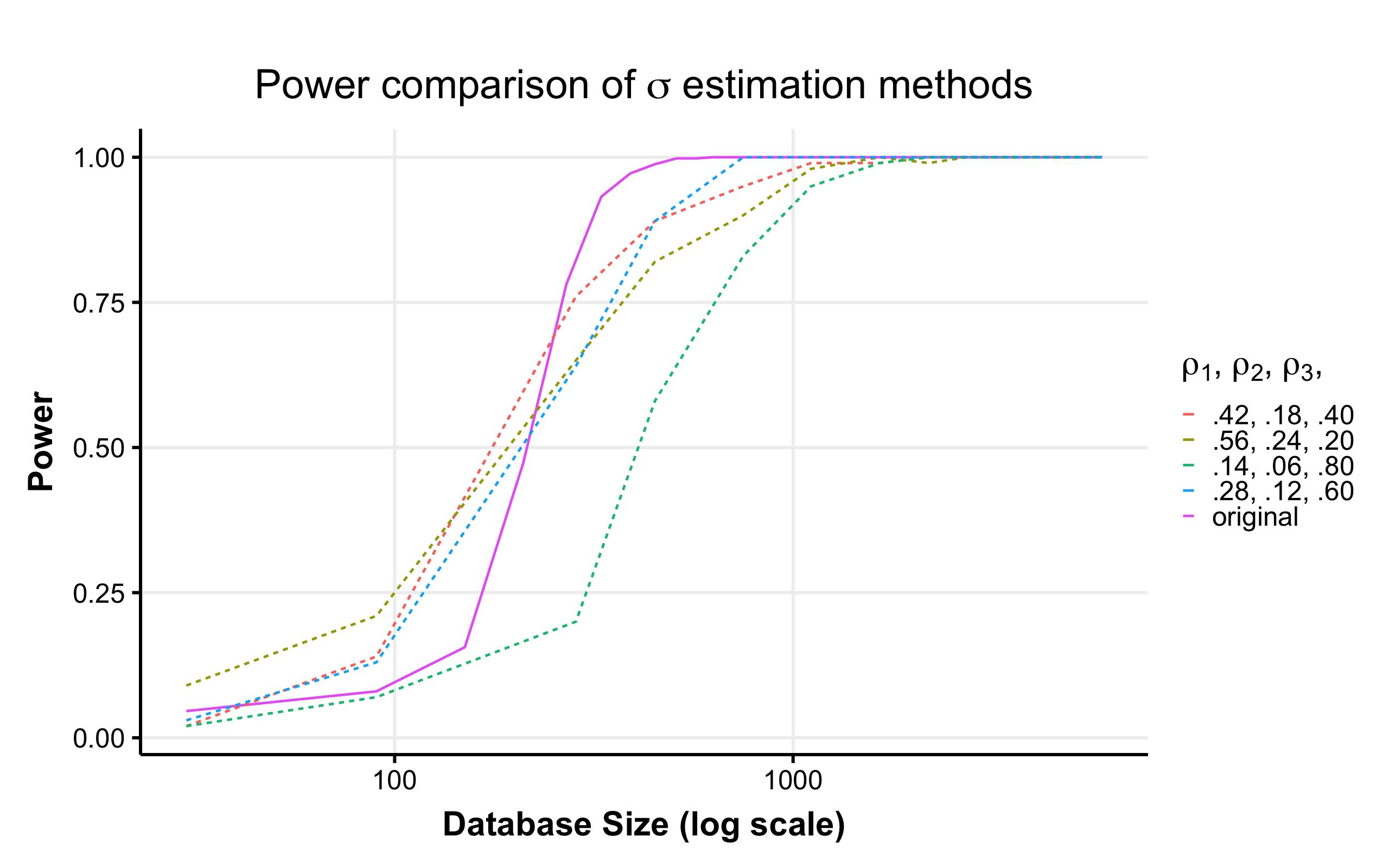}
\caption{The power of $F_1$ for a fixed $\epsilon$ value budgeted across \sa, \se, and \var calculations, with the portion of $\epsilon$ not allocated to $\rho_3$ allocated between $\rho_1$ and $\rho_2$ with a 70-30 split.\label{Fig:multiple-rhos}}
\end{figure}

\section{Appendix: Parameter Selection}
\label{Sec:AppOptPars}

Empirical results guiding parameter selection are presented on the following
pages in Figure~\ref{fig:par-plot-1} ($\epsilon$ = .1) and Figure~\ref{fig:par-plot-2} 
($\epsilon$ = 1). 

\begin{figure*}[h]
\centering
\includegraphics[width=\linewidth]{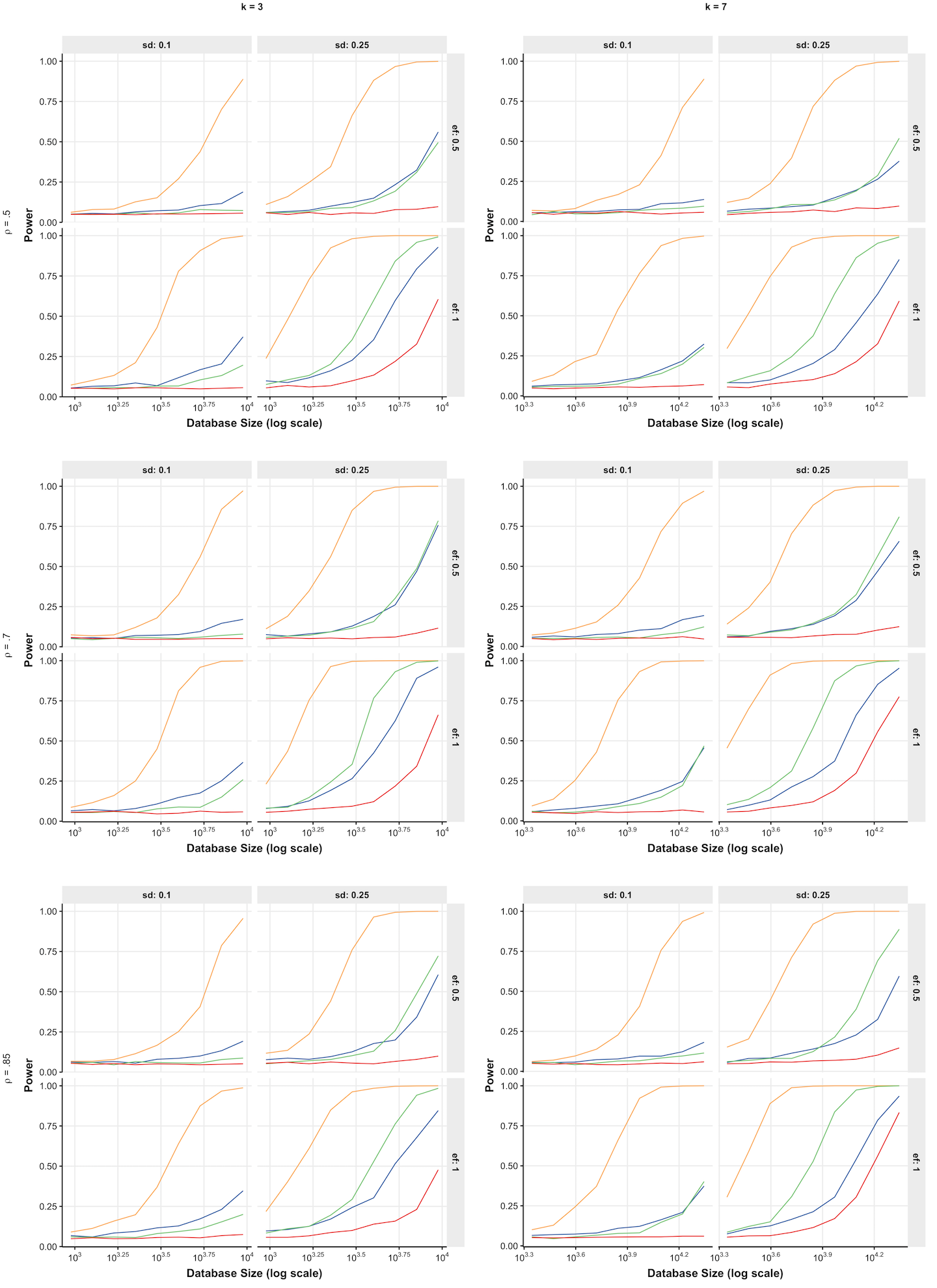}
\caption{Power comparison in settings where $\epsilon = .1$. Within each subplot,
each color curve corresponds to a different value of $q$ (blue: 0.75, gold: 1,
green: 1.5, red: 2). The two main columns
of plots correspond to different values of $k$ and the three main rows correspond
to difference values of $\rho$. Note that the scale on the x-axis differs with $k$
($k = 7$ requires more data).}\label{fig:par-plot-1}
\end{figure*}

\begin{figure*}[h]
\centering
\includegraphics[width=\linewidth]{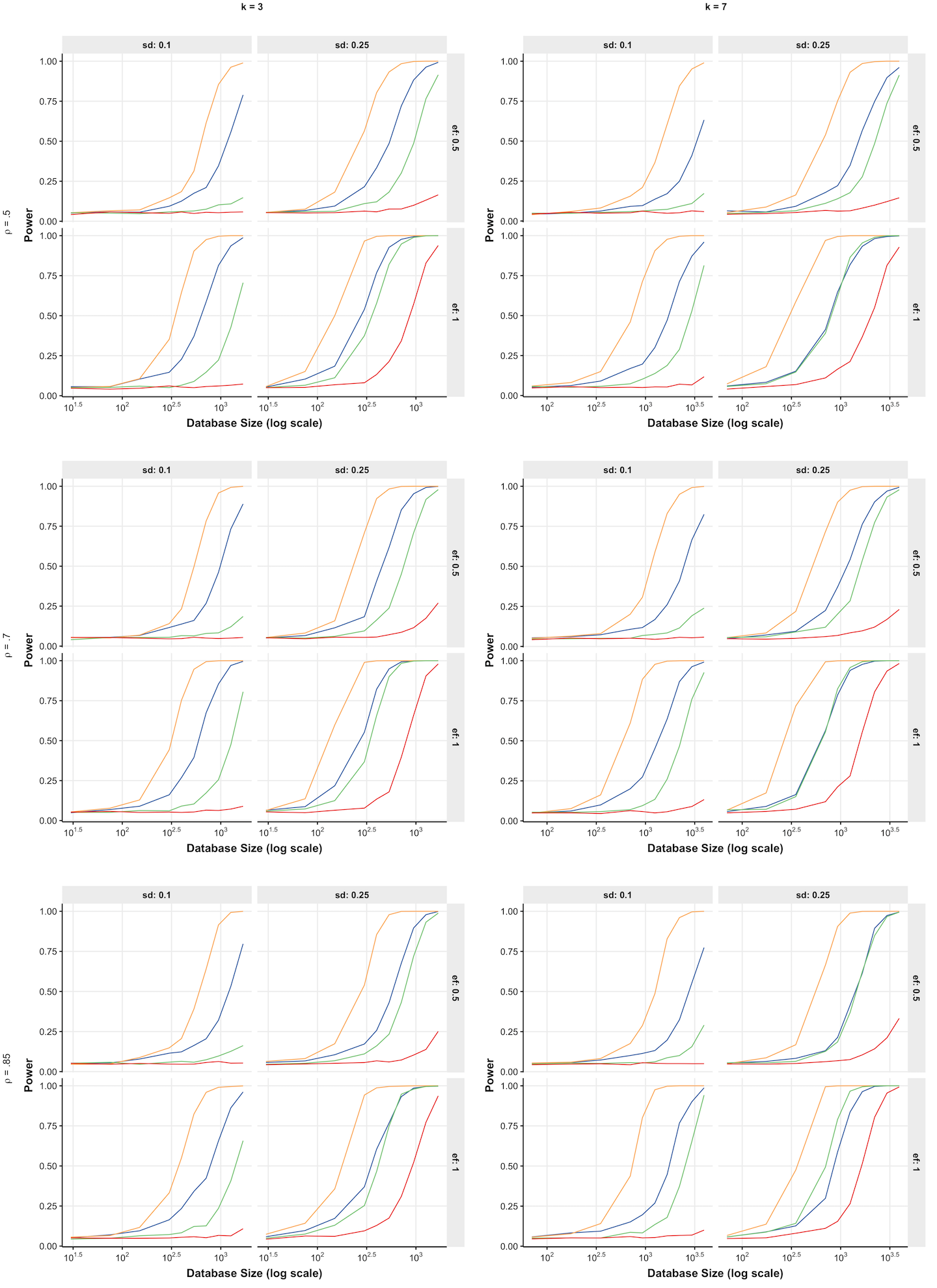}
\caption{Power comparison in settings where $\epsilon = 1$. Within each subplot,
each color curve corresponds to a different value of $q$ (blue: 0.75, gold: 1,
green: 1.5, red: 2). The two main columns
of plots correspond to different values of $k$ and the three main rows correspond
to difference values of $\rho$. Note that the scale on the x-axis differs with $k$
($k = 7$ requires more data).}\label{fig:par-plot-2}
\end{figure*}

\end{document}